\titleformat*{\section}{\LARGE\bfseries}
\titleformat*{\subsection}{\Large\bfseries}
\titleformat*{\subsubsection}{\large\bfseries}
\providecommand{\algorithmname}{Algorithm}
\providecommand{\claimname}{Claim}
\providecommand{\corollaryname}{Corollary}
\providecommand{\definitionname}{Definition}
\providecommand{\examplename}{Example}
\providecommand{\factname}{Fact}
\providecommand{\lemmaname}{Lemma}
\providecommand{\propositionname}{Proposition}
\providecommand{\theoremname}{Theorem}
\theoremstyle{definition}
\newtheorem{example}{\examplename}
\newtheorem{defn}{\definitionname}
\theoremstyle{plain}
\newtheorem{lyxalgorithm}{\algorithmname}
\newtheorem{thm}{\theoremname}
\newtheorem{cor}{\corollaryname}
\newtheorem{prop}{\propositionname}
\newtheorem{fact}{\factname}
\newtheorem{lem}{\lemmaname}
\theoremstyle{remark}
\newtheorem{claim}{\claimname}
\providecommand{\tabularnewline}{\\}
\def\rcal{\mathcal{R}}
\def\hcal{\mathcal{H}}
\def\ttc{$\text{TTC}_{\succ}$ }
\providecommand{\algorithmname}{Algorithm}
\providecommand{\claimname}{Claim}
\providecommand{\corollaryname}{Corollary}
\providecommand{\definitionname}{Definition}
\providecommand{\examplename}{Example}
\providecommand{\factname}{Fact}
\providecommand{\lemmaname}{Lemma}
\providecommand{\propositionname}{Proposition}
\providecommand{\theoremname}{Theorem}
\newcommand{\blfootnote}[1]{%
\begingroup
\renewcommand\thefootnote{}\footnote{#1}%
\addtocounter{footnote}{-1}%
\endgroup
}
\begin{document}


\title{Shapley-Scarf Markets with Objective Indifferences}

\author{ Will Sandholtz\thanks{UC Berkeley. willsandholtz@econ.berkeley.edu} \and Andrew Tai\thanks{US Department of War. andrew.a.tai@gmail.com. Disclaimer: The views expressed are those of the author and do not reflect the official policy or position of the Department of War or the U.S. Government.} } \date{January 2026}

\maketitle


\begin{abstract}
Top trading cycles with fixed tie-breaking (TTC) has been suggested to deal with indifferences in object allocation problems. Unfortunately, under general indifferences, TTC is neither Pareto efficient nor group strategy-proof. Furthermore, it may not select an allocation in the core of the market, even when the core is non-empty. However, when indifferences are agreed upon by all agents (``objective indifferences''), TTC maintains Pareto efficiency, group strategy-proofness, and core selection. Further, we characterize objective indifferences as the most general setting where TTC maintains these properties.

\end{abstract}

\blfootnote{We are grateful to Haluk Ergin and Chris Shannon for guidance. We also thank seminar attendants at UC Berkeley, the University of Western Australia, the Midwest Theory Conference, and the Berlin Micro Theory and Behavioral Economics PhD Conference for helpful comments; in particular, Federico Echenique, Ivan Balbuzanov, and Yuichiro Kamada. We also thank anonymous referees for their suggestions.}


\section{Introduction}

Important markets including living donor organ transplants, public housing assignments, and school choice can be modeled as Shapley-Scarf markets: each agent is endowed with an indivisible object and has preferences over the set of objects. Monetary transfers are not allowed, and participants have property rights to their own endowments. The goal is to re-allocate these objects among the agents to achieve efficiency and stability. The usual stability notion is the core: an allocation is in the core if no subset of agents would prefer to trade their endowments among themselves. In the original setting of \citet{SS74}, agents have strict preferences over the houses, and Gale's \textit{top trading cycles} (TTC) algorithm finds an allocation in the core. \citet{RP77} further show that the core is non-empty, unique, and Pareto efficient. \citet{Roth82} shows that TTC is strategy-proof; \citet{Moulin95}, \citet{Bird84}, \citet{ST24}, and \citet{Papai00} show that it is group strategy-proof. These properties make TTC an attractive algorithm for practical applications.

However, the assumption that preferences are strict is strong. In particular, if any objects are essentially identical, agents should naturally be indifferent between them. Consider the problem of assigning students to college dormitories. It seems reasonable to assume that two units with the same floor plan in the same building are equivalent in the eyes of a student. For example, the undergraduate dorm application process at UC Berkeley applies this same logic. On housing applications, students rank their five most preferred $\textit{housing complex}\times\textit{floor plan}$ pairs. It is implicitly assumed that students have strict preferences over $\textit{housing complex}\times\textit{floor plan}$ pairs, but are indifferent between dorm units of the same type. Beyond college dormitory assignment problems, there are a host of real-world object assignment problems (military occupational specialty assignment, school choice, class assignment, etc.) that can be modeled with a similar structure.

Motivated by these examples, we study a model of Shapley-Scarf markets where there may be indistinguishable copies (which we will call ``houses'') of the objects (which we will call ``types'' or ``house types''). Our model restricts agents to be indifferent between houses of the same type, but never indifferent between houses of different types. We call these preferences ``objective indifferences.'' Objective indifferences is a minimal model of indifferences, capturing the most basic and plausible form of indifferences.

In the fully general setting where agents' preferences may contain indifferences, \textit{TTC with fixed tie-breaking} has been proposed as an intuitive extension. Ties in the agents' preference rankings are broken by some external rule, then TTC is executed on the resulting strict preference rankings. \citet{Ehlers14} is the first to formalize this procedure for Shapley-Scarf markets, though similar ideas date to earlier work. For example, \citet{AS03} deal with tie-breaking between students in the school choice setting via an exogenous priority rule. However, TTC with fixed tie-breaking has several drawbacks. First, it is not Pareto efficient nor group strategy-proof. In fact, there is an inherent tension between these two properties: \citet{Ehlers02} shows that under general indifferences, there does not exist a Pareto efficient and group strategy-proof mechanism for Shapley-Scarf markets. Second, even when the core of a market is non-empty, TTC with fixed tie-breaking may not select a core allocation.

Objective indifferences adds structure to the case of general indifferences by constraining any indifferences to be universal among agents. We show that objective indifferences characterizes maximal domains on which TTC with fixed tie-breaking is Pareto efficient, group strategy-proof, and core selecting. This statement contains three distinct results. First, TTC with fixed tie-breaking recovers these three properties on the restriction of objective indifferences. Second, any more general setting causes TTC to lose Pareto efficiency and core selection. Third, any setting in which TTC maintains either of these properties is a subset of an objective indifferences domain. The analogous results hold for group strategy-proofness among preference domains that include both directions of any strict ranking. We will argue that this is not an onerous restriction in market design applications. The last two results are remarkable -- objective indifferences and more restrictive settings are all of the settings on which TTC with fixed tie-breaking is Pareto efficient, core selecting, and group strategy-proof.  In other words, it is subjective indifferences that may differ between agents, and not indifferences in general, which cause TTC with fixed tie-breaking to lose these properties.

While our findings are not necessarily prescriptive nor proscriptive, they may help policymakers decide whether TTC is a good solution for their setting. Consider school choice in San Francisco, which uses a lottery system to assign school seats at most public schools. While current details are not readily available, Abdulkadiroğlu, Featherstone, Niederle, Pathak, and Roth designed a system using TTC.\footnote{See the blog post by Al Roth: https://marketdesigner.blogspot.com/2010/09/san-francisco-school-choice-goes-in.html. As he notes, the team were not privy to the implementation or resulting data.} At some schools, there are seats dedicated to language immersion programs and other seats that are intended for general education. For instance, at West Portal Elementary School, there are roughly 120 seats, approximately 25\% of which are for Cantonese immersion.\footnote{https://web.archive.org/web/20250422170224/https://www.sfchronicle.com/bayarea/article/sfusd-competitive-public-schools-20252957.php} Families separately rank seats by $school\times language$. If all families' preferences can be described by objective indifferences, then TTC maintains its most important properties.

However, the real situation may be more complicated. Some families may be indifferent between bilingual and regular seats, while other families have strict preferences. For example, a family whose children are already bilingual in Cantonese may be indifferent between the two types of seats at West Portal, while another family may have a strict preference for cultural community through the Cantonese bilingual program.\footnote{The relative non-competitiveness of language immersion seats in the SFUSD appears to be a source of anxiety for parents. Seats are further divided into already-bilingual and not already bilingual. While West Portal receives 7.7 requests per open seat, its Cantonese immersion program for already bilingual students receives 7.2 requests per open seat, and 13.3 requests per open seat for students not already bilingual. That is, parents who simply want a seat at West Portal may apply for and receive already-bilingual seats, preventing parents with true desires for these seats from receiving them. Perhaps noticing this, parents have been agitating for expansion of language immersion programs, particularly for Mandarin, although we can only speculate on private motives.} If so, then TTC with fixed tie-breaking is no longer Pareto efficient, core-selecting, nor group strategy-proof. Indeed, it appears San Francisco is ready to change its school choice mechanism away from TTC.\footnote{https://web.archive.org/web/20250524170000/https://www.sfusd.edu/schools/enroll/student-assignment-policy/student-assignment-changes} While there are certainly a variety of reasons San Francisco finds its current mechanism unsatisfactory,\footnote{\citet{Pathak17} writes ``The difficulty in explaining TTC compared to DA is also a reason that Recovery School District in New Orleans switched mechanisms after one year with DA.''} our results can partially rationalize events.

To our knowledge, the first use of the term ``objective indifferences'' was due to \citet{BM99}, who (almost tangentially) note that their probabilistic serial mechanism for random object assignment can accommodate this case. \citet{FSW03} and \citet{CS10} deal with competitive equilibrium in Shapley-Scarf markets with objective indifferences. Objective indifferences is similar to models where objects have capacities, as is sometimes applied in school choice; for example, see \citet{Morrill15} and \citet{AS03}. However, there are important differences which we illustrate in \ref{schoolchoiceappendix}. Object capacities give rise to a priority structure rather than an endowment structure, which changes the way TTC operates. More importantly, our emphasis here is on the objective indifferences setting as a domain of preferences, as will be clear in the results.

Others have have studied Shapley-Scarf markets with indifferences. \citet{Ehlers14} gives a characterization of TTC with fixed tie-breaking under general indifferences. \citet{QW04} are the first to provide an algorithm finding the strict core in the presence of indifferences. \citet{AM11} and \citet{JM12} provide strategy-proof, Pareto efficient, and core selecting families of trading cycle mechanisms. \citet{AK12} and \citet{Plaxton13} propose further generalizations. Fundamentally, the challenge for any mechanism in a Shapley-Scarf market with indifferences is determining which trading cycles to execute from among the many potential trading cycles that indifferences may induce. Using a fixed tie-breaking rule is intuitive and easy to implement, but as \citet{Ehlers14} shows, it comes at the expense of certain desirable properties, including Pareto efficiency and core selection. Though we too study Shapley-Scarf markets with indifferences, we place additional structure on the agents' indifferences and demonstrate how this resolves many of the challenges that indifferences pose.

Our paper makes important contributions to the research in Shapley-Scarf markets along two lines. First, we contribute to the understanding of TTC. While there already exist TTC-like mechanisms that deal with indifferences, TTC is more intuitive and more widely known. Our paper captures exactly when its most important properties are maintained -- under objective indifferences. We thus rationalize its use in many settings where the Shapley-Scarf model is actually applicable, such as in housing or school assignment. Second, we contribute to the understanding and awareness of the objective indifferences setting. While many have implicitly incorporated this model (e.g., objects with capacities), few have explicitly considered objective indifferences as a domain of preferences. Indeed, literature in mechanism design often considers only strict preferences or the full domain of indifferences. We show that objective indifferences adds an interesting and important intermediate case. It realistically models many object allocation problems and has the potential to preserve important properties relative to general indifferences.

Section \ref{sec:model} presents the formal notation. Section \ref{sec:TTC} explains TTC with fixed tie-breaking. Section \ref{sec:Results} provides the main results. Section \ref{sec:concl} concludes. Proofs are in \ref{proofsappendix}.

\section{Model}
\label{sec:model}

In this section, we present the model primitives. First we recount the classical \citet{SS74} setting. Afterwards we introduce the ``objective indifferences'' domain.

We first present the classical Shapley-Scarf model. Let $N=\{1,\ldots,n\}$ be a finite set of agents, with generic member $i$. Let $H=\{h_{1},\ldots,h_{n}\}$ be a set of houses, with generic member $h$. Every agent is endowed with one object, given by a bijection $w:N\rightarrow H$.  An allocation is an assignment of an object to each agent, also a bijection $x:N\rightarrow H$. The set of all allocations given $(N,H)$ is $X(N,H)$, or simply $X$. For any $i\in N$, we use $w_{i}$ as shorthand for $w(i)$ and $x_{i}$ for $x(i)$. Similarly, for any $Q\subseteq N$ we use $w_{Q}$ for $w(Q)=\{w_{i}:i\in Q\}$ and $x_{Q}$ for $x(Q)=\{x_{i}:i\in Q\}$.

Each agent $i$ has some preference ordering over $H$, denoted $R_{i}$. That is, $R_{i}$ is a transitive, complete, reflexive binary relation. As usual, we let $P_{i}$ denote the strict relation and $I_{i}$ denote the indifference relation associated with $R_{i}$. A set of allowable preference orderings in a problem is denoted $\mathcal{R}$, which we call a \textbf{domain}. A preference profile is $R=(R_{1},...,R_{n})\in\mathcal{R}^{n}$. For any $Q\subseteq N$ and preference profile $R$, we denote $R_{Q}=(R_{q})_{q\in Q}$. In this paper we restrict attention to settings where allowable preference profiles are drawn from some $\mathcal{R}^{n}$. That is, every agent has the same set of possible preference relations.

If $\mathcal{R}$ is the set of strict preference relations over $H$, then this is the \textbf{strict preferences domain}. To demonstrate the notation, the strict preference domain can be denoted
\begin{align*}
\mathcal{R}_{\text{strict}}=\left\{ R_{i}:hI_{i}h'\iff h=h'\right\}
\end{align*}
If $\mathcal{R}$ is the set of weak preference relations over $H$, it is the \textbf{general indifferences domain}.

Our main domain is \textbf{objective indifferences}. Let $\mathcal{H}=\{H_{1},H_{2},\ldots,H_{K}\}$ be a partition of $H$. An element $H_{k}$ of a partition is a \textbf{block}. Given $H$ and $\mathcal{H}$, let $\eta:H\rightarrow\mathcal{H}$ be the mapping from a house to the partition element containing it; that is, $\eta(h)=H_{k}$ if $h\in H_{k}$. Given $\mathcal{H}$, we denote the objective indifferences domain as
\begin{align*}
\mathcal{R}(\mathcal{H}):=\left\{ R_{i}:hI_{i}h'\iff\eta(h)=\eta(h')\right\}
\end{align*}
We sometimes suppress $(\mathcal{H})$ from the notation when context makes it clear. Note that an agent is indifferent between houses in the same block of $\hcal$ and has strict preferences between houses in different blocks. Since all $i\in N$ draw from the same set of allowable preferences, these indifferences are ``objective''; all agents must be indifferent between houses in the same block. Thus, we sometimes refer to the ``indifference classes'' of the domain with the understanding that every agent shares the same indifference classes. The next example illustrates.
\begin{example}
    Let $H=\{h_{1},h_{2},h_{3}\}$ and $\mathcal{H}=\left\{ \{h_{1},h_{2}\},\{h_{3}\}\right\} $. Then the two possible preference orderings in $\mathcal{R}(\mathcal{H})$ are given by
    
    \begin{figure}[htbp]
    \centering \begin{table}[H]
    \centering
    \begin{tabular}{@{\hskip 6pt}c@{\hskip 6pt}}
        $R_{\alpha}$ \tabularnewline \hline
        $h_{1},h_{2}$ \tabularnewline
        $h_{3}$ \tabularnewline
    \end{tabular}
    \hspace{1cm}
    \begin{tabular}{@{\hskip 6pt}c@{\hskip 6pt}}
        $R_{\beta}$ \tabularnewline \hline
        $h_{3}$ \tabularnewline
        $h_{1},h_{2}$ \tabularnewline
    \end{tabular}
\end{table}
    \end{figure}
    
    \noindent That is, $h_{1}I_{\alpha}h_{2}P_{\alpha}h_{3}$ and $h_{3}P_{\beta}h_{1}I_{\beta}h_{2}$. Given a set of agents $N=\{1,2,3\}$, objective indifferences requires $R_{i}\in\{R_{\alpha},R_{\beta}\}$ for each $i \in N$.
\end{example}

\subsection{Mechanisms}

This subsection recounts formalities on mechanisms and top trading cycles. Familiar readers may safely skip this subsection.

A \textbf{market} is a tuple $(N,H,w,R)$. A \textbf{mechanism} is a function $f:\mathcal{R}^{n}\rightarrow X$; given a preference profile, it produces an allocation. When it is unimportant or clear from context, we suppress inputs from the notation. For any $i\in N$, let $f_{i}(R)$ denote $i$'s allocated house under $f(R)$. Similarly, for any $Q\subseteq N$, let $f_{Q}(R)=\{f_{i}(R):i\in Q\}$.

Fix a mechanism $f$, a tuple $(N,H,w)$, and a preference domain $\rcal$. We work with the following axioms.

A mechanism is Pareto efficient if it always selects Pareto efficient allocations.

\smallskip{}
    \noindent \textbf{Pareto efficiency (PE)}. For all $R\in\mathcal{R}^{n}$, there
    is no other allocation $x\in X$ such that $x_{i}R_{i}f_{i}(R)$ for all $i\in N$ and $x_{i}P_{i}f_{i}(R)$ for at least one $i\in N$.
    \smallskip{}

Group strategy-proofness requires that no coalition of agents can collectively improve their outcomes by submitting false preferences. Note that in the following definition, we require both the true preferences and misreported preferences to come from the preference domain $\mathcal{R}$.\footnote{This differs from other studies, e.g., \citet{Ehlers02}.}

\smallskip{}
    \noindent \textbf{Group strategy-proofness (GSP)}. For all $R\in\mathcal{R}^{n}$,
    there do not exist $Q\subseteq N$ and $R'=(R'_{Q},R_{-Q})\in\rcal^{n}$ such that $f_{i}(R')R_{i}f_{i}(R)$ for all $i\in Q$ with $f_{i}(R')P_{i}f_{i}(R)$ for at least one $i\in Q$.
    \smallskip{}

Individual rationality models the constraint of voluntary participation. It requires that agents receive a house they weakly prefer to their endowment.

\smallskip{}
    \noindent\textbf{Individual rationality (IR)}. For all $R\in\mathcal{R}^{n}$,
    $f_{i}(R)R_{i}w_{i}$ for all $i\in N$.
\smallskip{}

We also define the core of a market: an allocation is in the core if there is no subset of agents who could benefit from trading their endowments among themselves.
\begin{defn}
    \label{defn:core} An allocation $x$ is \textbf{blocked} if there exists a coalition $Q\subseteq N$ and allocation $y$ such that $y_{Q}=w_{Q}$ and $y_{i}R_{i}x_{i}$ for all $i\in Q$, with $y_{i}P_{i}x_{i}$ for at least one $i\in Q$. An allocation $x$ is in the \textbf{core} of the market if it is not blocked.
    \end{defn}
The weak core requires that all members of a potential coalition are strictly better off.
\begin{defn}
    \label{defn:weakcore} An allocation $x$ is \textbf{weakly blocked} if there exists a coalition $Q\subseteq N$ and allocation $y$ such that $y_{Q}=w_{Q}$ and $y_{i}P_{i}x_{i}$ for all $i\in Q$. An allocation $x$ is in the \textbf{weak core} if it is not weakly blocked.
\end{defn}
\noindent Of course, the weak core also contains the core. The (weak) core property models the restriction imposed by property rights. For example, in housing allocation, existing tenants might have the right to stay in their status quo allocations. Notice that individual rationality excludes blocking coalitions of size 1. The last axiom is core-selecting.

\smallskip{}
    \noindent \textbf{Core-selecting (CS).} For all $R\in\mathcal{R}$, if the
    core of the market is non-empty then $f(R)$ is in the core.
\smallskip{}

\subsection{Maximality}

In Section \ref{sec:Results}, we characterize maximal domains on which TTC with fixed tie-breaking satisfies the axioms. By a ``maximal'' domain, we mean the following.
\begin{defn}
A domain $\rcal$ is \textbf{maximal} for Pareto efficiency and a class of mechanisms $F$ if
\begin{enumerate}
\item every $f\in F$ is Pareto efficient on $\mathcal{R}$, and \item for any $\tilde{\rcal}\supset\rcal$, there is some $f\in F$ that is \emph{not} Pareto efficient on $\tilde{\rcal}$.
\end{enumerate}
\end{defn}
\noindent The definition for any axiom besides Pareto efficiency is analogous. Notice that this definition of maximality depends on both the axiom and the class of mechanisms, which differs from elsewhere in the literature.\footnote{Commonly in the literature, a domain is maximal for Pareto efficiency if there exists \emph{some} mechanism which is Pareto efficient on it, and none on any larger domain.} We define maximality for a class of mechanisms, since our claims deal with TTC for \emph{all} tie-breaking rules. Again note that we only consider preference profiles drawn from $\rcal^{n}$, which is common.\footnote{This excludes settings where different agents may be able to report different sets of preferences.}


\section{Top trading cycles with fixed tie-breaking}
\label{sec:TTC}

In this paper, we analyze top trading cycles (TTC) with fixed tie-breaking on the domains defined in the previous section. For an extensive history of TTC, we refer the reader to \citet{MR24}. We briefly define TTC and TTC with fixed tie-breaking.
\begin{lyxalgorithm}
    \label{alg:ttc} \textbf{Top Trading Cycles}. Consider a market $(N,H,w,R)$ under strict preferences. Draw a graph with $N$ as nodes.
    \begin{enumerate}
    \item Draw an arrow from each agent $i$ to the owner (endowee) of his favorite remaining object. \item There must exist at least one cycle; select one of them. For each agent in this cycle, give him the object owned by the agent he is pointing at. Remove these agents from the graph. \item If there are remaining agents, repeat from step 1.
    \end{enumerate}
\end{lyxalgorithm}
\noindent We denote the resulting allocation as $\text{TTC}(R)$.

TTC is well-defined only with strict preferences, as Step 1 requires a unique favorite object. In practice, a \textbf{fixed tie-breaking profile} $\succ$ is often proposed to resolve indifferences. Given $N$, let $\succ=(\succ_{1},\ldots,\succ_{n})$, where each $\succ_{i}$ is a strict linear order over $N$. This linear order will be used to break indifferences between objects (based on their owners). For any preference relation $R_{i}$ and tie-breaking rule $\succ_{i}$, let $R_{i,\succ}$ be given by the following. For any $j\not=j'$, let $w_{j}R_{i,\succ}w_{j'}$ if
\begin{enumerate}
    \item $w_{j}P_{i}w_{j'}$, or
    \item $w_{j}I_{i}w_{j'}$ and $j\succ_{i}j'$.
\end{enumerate}
For all $j$, let $w_{j}R_{i,\succ}w_{j}$.

Then $R_{i,\succ}$ is a strict linear order over the individual houses. Note that $R_{i,\succ}$ is actually $R_{i,\succ_{i}}$ as it depends only on $i$'s tie-breaking rule, but we suppress this at minimal risk of confusion. Example \ref{ex:tiebreak} illustrates how we combine an agent's preferences and the tie-breaking rule to construct tie-broken preferences.

\begin{example}
\label{ex:tiebreak}
    Let $N=\{1,2,3,4\}$. Agent 1's preferences $R_{1}$ and tie-breaking rule $\succ_{1}$ are shown below. In our lists of preference relations, each line represents an indifference class, and the houses on any line are strictly preferred to houses on lines below them. For example, the representation of $R_{1}$ below indicates $w_{3}I_{1}w_{4}P_{1}w_{1}I_{1}w_{2}$.
    
    \begin{figure}[htbp]
    \centering \noindent\makebox[1\textwidth]{%
\begin{minipage}[t]{0.15\textwidth}%
\centering %
\begin{tabular}{c}
$R_{1}$\\
\hline 
$w_{3},w_{4}$\\
$w_{1},w_{2}$\\
\\
\\
\end{tabular}%
\end{minipage}%
\begin{minipage}[t]{0.05\textwidth}%
 $+$ %
\end{minipage}%
\begin{minipage}[t]{0.15\textwidth}%
\centering %
\begin{tabular}{c}
$\succ_{1}$\\
\hline 
$1$\\
$2$\\
$3$\\
$4$\\
\end{tabular}%
\end{minipage}%

\begin{minipage}[t]{0.05\textwidth}%
 $\rightarrow$ %
\end{minipage}%

\begin{minipage}[t]{0.15\textwidth}%
 \centering %
\begin{tabular}{c}
$R_{1,\succ}$\\
\hline 
$w_{3}$\\
$w_{4}$\\
$w_{1}$\\
$w_{2}$\\
\end{tabular}%
\end{minipage}%
}
\smallskip{} \label{fig:example_making_tb_rankings}
    \end{figure}
    
    \noindent Since $w_{3}I_{1}w_{4}$ and $3\succ_{1}4$, we have $w_{3}P_{1,\succ}w_{4}$.
    Likewise, since $w_{1}I_{1}w_{2}$ and $1\succ_{1}2$, we have $w_{1}P_{1,\succ}w_{2}$. Therefore agent 1's complete tie-broken preferences $R_{1,\succ}$ are given by $w_{3}\;P_{1,\succ}\;w_{4}\;P_{1,\succ}\;w_{1}\;P_{1,\succ}\;w_{2}$.
\end{example}

\smallskip{}

Given a preference profile $R\in\rcal^{n}$ and a tie-breaking profile $\succ$, let $R_{\succ}=(R_{1,\succ},\ldots,R_{n,\succ})$. \textbf{TTC with fixed tie-breaking ($\text{TTC}_{\succ}$)} is $\text{TTC}_{\succ}(R)\equiv\text{TTC}(R_{\succ})$. That is, the tie-breaking profile is used to generate strict preferences, and TTC is applied to the resulting strict preference profile. Formally, each tie-breaking profile $\succ$ generates a different \ttc mechanism, so TTC with fixed tie-breaking is a class of mechanisms, $\left\{ \text{TTC}_{\succ}\right\} _{\succ}$. For a given $R$ and $\succ$, we use $\text{TTC}_{\succ}(R)$ to refer both to the step-by-step procedure of \ttc and to the final allocation it generates. The following example illustrates how \ttc works in the objective indifferences domain.
\begin{example}
\label{ttcwithtb}

Let $N=\{1,2,3,4\}$ and $\mathcal{H}=\{ \{w_1\}, \{w_2, w_3\}, \{w_4\} \}$. The preference profile $R$ and tie-breaking profile $\succ=(\succ_{1},\succ_{2},\succ_{3},\succ_{4})$ are shown below. $R$ and $\succ$ are combined in the manner shown in Example \ref{ex:tiebreak} to construct the tie-broken preference profile $R_{\succ}$. Recall that $\text{TTC}_{\succ}(R)$ is equivalent to $\text{TTC}(R_{\succ})$.

\begin{figure}[H]
\centering

\begin{subfigure}[t]{0.3\textwidth}
    \centering
    \centering
\setlength{\tabcolsep}{10pt}
\begin{tabular}{@{\hskip 6pt}c@{\hskip 6pt}c@{\hskip 6pt}c@{\hskip 6pt}c@{\hskip 6pt}}
    $R_1$ & $R_2$ & $R_3$ & $R_4$ \\
    \hline
    $w_2,w_3$ & $w_1$ & $w_1$ & $w_2,w_3$ \\
    $w_1$     & $w_2,w_3$ & $w_4$ & $w_4$ \\
    $w_4$     & $w_4$ & $w_2,w_3$ & $w_1$ \\
    &&&
\end{tabular}

    \caption*{\small Preference profile $R$}
\end{subfigure}
\hfill
\begin{subfigure}[t]{0.3\textwidth}
    \centering
    \centering
\setlength{\tabcolsep}{10pt}
\begin{tabular}{@{\hskip 6pt}c@{\hskip 6pt}c@{\hskip 6pt}c@{\hskip 6pt}c@{\hskip 6pt}}
    $\succ_1$ & $\succ_2$ & $\succ_3$ & $\succ_4$ \\
    \hline
    $2$ & $1$ & $3$ & $3$ \\
    $1$ & $2$ & $2$ & $1$ \\
    $3$ & $3$ & $1$ & $2$ \\
    $4$ & $4$ & $4$ & $3$
\end{tabular}

    \caption*{\small Tie-breaking profile $\succ$}
\end{subfigure}
\hfill
\begin{subfigure}[t]{0.3\textwidth}
    \centering
    \centering
\setlength{\tabcolsep}{10pt}
\begin{tabular}{@{\hskip 6pt}c@{\hskip 6pt}c@{\hskip 6pt}c@{\hskip 6pt}c@{\hskip 6pt}}
    $R_{1,\succ}$ & $R_{2,\succ}$ & $R_{3,\succ}$ & $R_{4,\succ}$ \\
    \hline
    $w_2$ & $w_1$ & $w_1$ & $w_3$ \\
    $w_3$ & $w_2$ & $w_4$ & $w_2$ \\
    $w_1$ & $w_3$ & $w_3$ & $w_4$ \\
    $w_4$ & $w_4$ & $w_2$ & $w_1$
\end{tabular}
    \caption*{\small Tie-broken preference profile $R_\succ$}
\end{subfigure}
\end{figure}

\vspace{1em}

\noindent{}%
\begin{minipage}[c]{0.25\textwidth}%
\centering \begin{tikzpicture}[
    > = Stealth, 
    shorten > = 1pt, 
    auto,
    node distance = 1.3cm, 
    semithick 
]
    \node[circle, draw, inner sep=2pt] (1) {$1$};
    \node[circle, draw, inner sep=2pt, right of=1] (2) {$2$};
    \node[circle, draw, inner sep=2pt, below of=1] (3) {$3$};
    \node[circle, draw, inner sep=2pt, below of=2] (4) {$4$};
    
    \path[->] (1) edge [bend left] (2);
    \path[->] (1) edge [dashed, color = red, bend right] (3);
    \path[->] (2) edge [bend left] (1);
    \path[->] (3) edge [bend right] (1);
    \path[->] (4) edge [dashed, color = red, bend right] (2);
    \path[->] (4) edge [bend left] (3);
\end{tikzpicture} %
\end{minipage}\hfill{}%
\begin{minipage}[c]{0.7\textwidth}%
\vspace*{0pt} 
\underline{Step 1 of $\text{TTC}_{\succ}(R)$:} \\
Each agent points to the owner of their favorite house according
to their \emph{tie-broken} preferences, represented by black arrows.
Red dashed arrows represent indifferences between $w_{2}$ and $w_{3}$.
Agents 1 and 2 form a cycle and therefore swap houses. %
\end{minipage}

\vspace{3em}

\noindent{}%
\begin{minipage}[c]{0.25\textwidth}%
\centering \begin{tikzpicture}[
    > = Stealth, 
    shorten > = 1pt, 
    auto,
    node distance = 1.3cm, 
    semithick 
]
    \node[circle, draw, inner sep=2pt] (3) {$3$};
    \node[circle, draw, inner sep=2pt, right of=3] (4) {$4$};
    
    \path[->] (3) edge [bend left] (4);
    \path[->] (4) edge [bend left] (3);
\end{tikzpicture} %
\end{minipage}\hfill{}%
\begin{minipage}[c]{0.7\textwidth}%
\vspace*{0pt}
\underline{Step 2 of $\text{TTC}_{\succ}(R)$:} \\ After removing the agents assigned in Step 1, the remaining agents (3 and 4) point to the owner of their favorite remaining house. They form a cycle, and therefore swap houses. Since every agent has been assigned to a house, the \ttc procedure ends. The resulting allocation is $x=(w_{2},w_{1},w_{4},w_{3})$. %
\end{minipage}
\end{example}


\section{Results}\label{sec:Results}

With general indifferences, $\text{TTC}_{\succ}$ mechanisms are not Pareto efficient, core-selecting, nor group strategy-proof. We provide simple examples below to illustrate these failures. However, we show that in the objective indifferences domain, \ttc mechanisms satisfy all three properties. We also show that expanding beyond this domain results in the loss of each property. Strikingly, any domain satisfying any of these properties must be a subset of objective indifferences. That is, objective indifferences characterizes the set of maximal domains on which \ttc mechanisms are PE and CS and characterizes the set of ``symmetric-maximal'' (defined below) domains on which \ttc mechanisms are GSP.

\subsection{Pareto efficiency and core-selecting}

When we relax the assumption of strict preferences and allow for general indifferences, \ttc loses two of its most appealing properties: Pareto efficiency and core-selecting. However, in the intermediate case of objective indifferences, \ttc retains these two properties. Moreover, on \emph{any} larger domain, $\text{TTC}_{\succ}$ loses both Pareto efficiency and core-selecting. Thus, we show that it is not indifferences per se, but rather \textit{subjective} evaluations of indifferences which cause \ttc to lose these properties.

We first demonstrate that $\text{TTC}_{\succ}$ mechanisms are not Pareto efficient under general indifferences. Example \ref{ex:indiff-pe} gives the simplest case.

\begin{example}
\label{ex:indiff-pe} Let $N=\{1,2\}$. The preference profile $R=(R_{1},R_{2})$, tie-breaking profile $\succ=(\succ_{1},\succ_{2})$, and tie-broken preference profile $R_{\succ}=(R_{1,\succ},R_{2,\succ})$ are shown below.

\begin{figure}[H]
\centering

\begin{subfigure}[t]{0.3\textwidth}
    \centering
    \centering
\setlength{\tabcolsep}{6pt}
\begin{tabular}{cc}
    $R_{1}$  & $R_{2}$ \\
    \hline 
    $w_{1},w_{2}$  & $w_{1}$ \\
                   & $w_{2}$ 
\end{tabular}
    \caption*{\small Preference profile $R$}
\end{subfigure}
\hfill
\begin{subfigure}[t]{0.3\textwidth}
    \centering
    \centering
\setlength{\tabcolsep}{6pt}
\begin{tabular}{c}
    $\succ_{1} = \succ_{2}$ \\
    \hline 
    $1$ \\
    $2$ 
\end{tabular}

    \caption*{\small Tie-breaking profile $\succ$}
\end{subfigure}
\hfill
\begin{subfigure}[t]{0.3\textwidth}
    \centering
    \centering
\setlength{\tabcolsep}{6pt}
\begin{tabular}{cc}
    $R_{1,\succ}$  & $R_{2,\succ}$ \\
    \hline 
    $w_{1}$  & $w_{1}$ \\
    $w_{2}$  & $w_{2}$ 
\end{tabular}
    \caption*{\small Tie-broken preference profile $R_\succ$}
\end{subfigure}
\end{figure}

\noindent The \ttc allocation is $x=(w_{1},w_{2})$, which is Pareto
dominated by $y=(w_{2},w_{1})$.
\end{example}

This example demonstrates the underlying reason that \ttc fails PE under general indifferences: tie-breaking rules may not take advantage of Pareto gains made possible by the agents' indifferences. However, under objective indifferences, if any agent is indifferent between two houses, then all agents are indifferent between those two houses. Consequently, objective indifferences rules out situations like the one shown in Example \ref{ex:indiff-pe}.

Under general indifferences, the set of core allocations may not be a singleton; there may be no core allocations or there may be multiple. As Example \ref{ex:indiff-pe} demonstrates, even when the core of the market is non-empty, \ttc may still fail to select a core allocation.\footnote{It is straightforward to see that $y=(w_{2},w_{1})$ is in the core of the market and $x=(w_{1},w_{2})$ is not.} However, under objective indifferences, if the core of a market is non-empty, then $\text{TTC}_{\succ}$ always selects a core allocation.

In fact, the objective indifferences setting characterizes the entire set of maximal domains on which \ttc mechanisms are Pareto efficient or core-selecting. That is, if all \ttc mechanisms are PE or CS on a domain $\rcal$, then $\rcal$ must be a subset of some objective indifferences domain. Conversely, for any superset of an objective indifferences domain, there is some \ttc mechanism (that is, some $\succ$) that is not PE or CS.
\begin{thm}
\label{thm:pecs} The following are equivalent:
\begin{enumerate}
\item $\rcal$ is an objective indifferences domain. \item $\rcal$ is a maximal domain on which \ttc mechanisms are Pareto efficient. \item $\rcal$ is a maximal domain on which \ttc mechanisms are core-selecting.
\end{enumerate}
\end{thm}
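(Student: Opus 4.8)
The plan is to reduce the three-way equivalence to two lemmas plus a set-theoretic observation, using the fact that both PE and CS are hereditary under shrinking the domain (each is universally quantified over profiles, and the core depends only on the market, so a property holding on $\rcal'$ holds on every $\rcal\subseteq\rcal'$). The observation is that $\rcal\subseteq\rcal(\hcal)$ for some $\hcal$ if and only if all preferences in $\rcal$ share one common indifference-class partition, since membership in $\rcal(\hcal)$ forces the indifference classes to be exactly the blocks of $\hcal$; hence any $\tilde\rcal\supsetneq\rcal(\hcal)$ contains a preference with a different indifference pattern and so lies in no objective indifferences domain. The two lemmas are: (A) if $\rcal\subseteq\rcal(\hcal)$ for some $\hcal$, then every \ttc mechanism is PE and CS on $\rcal$; (B) if $\rcal$ is contained in no objective indifferences domain, then some \ttc mechanism is not PE on $\rcal$ and some \ttc mechanism is not CS on $\rcal$. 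Granting these, the assembly is routine: for $(1)\Rightarrow(2)$, Lemma (A) gives the first maximality clause and the observation plus (B) handles every strict superset; for $(2)\Rightarrow(1)$, a maximal-PE $\rcal$ has all \ttc PE, so the contrapositive of (B) gives $\rcal\subseteq\rcal(\hcal)$, and a strict inclusion would contradict maximality via (A), forcing $\rcal=\rcal(\hcal)$. The equivalence $(1)\Leftrightarrow(3)$ is identical using the CS halves of (A) and (B).

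For Lemma (A) I would first record the standard fact that any core allocation is PE: if $y$ Pareto-dominates $x$ then the grand coalition $N$ blocks $x$, since $y_N=w_N=H$ (see Definition \ref{defn:core}). Thus it suffices to show $x=\text{TTC}_\succ(R)$ lies in the core for every $\succ$ and every $R\in\rcal(\hcal)^n$, which simultaneously yields CS (the core is then always nonempty and contains the selected allocation). The device is that under objective indifferences welfare is determined block by block: writing $\beta_x(i)=\eta(x_i)$, we have $y_iR_ix_i$ iff $\beta_y(i)$ is weakly $R_i$-preferred to $\beta_x(i)$ (strict iff strict), and a coalition can redistribute exactly the block-multiset it owns. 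Hence house-level blocking under $R$ is equivalent to blocking in the induced \emph{block economy}, a Shapley-Scarf market where each block $H_k$ is an object of capacity $|H_k|$, agent $i$ owns one unit of $\eta(w_i)$, and block preferences are strict. Since $R_{i,\succ}$ refines $R_i$ only within blocks, $\beta_x$ is precisely an outcome of top trading cycles on this strict block economy, so the classical TTC core argument adapted to capacities applies: agents in the first cycles receive a most-preferred block, cannot belong to a blocking coalition, and their removal reduces to a smaller block economy, placing $\beta_x$ in the core of the block economy and thus $x$ in the core under $R$.

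For Lemma (B), the hypothesis yields preferences $R,R'\in\rcal$ and houses $h,h'$ with $h\,I_R\,h'$ but (after possibly swapping the two houses) $h\,P_{R'}\,h'$. The target is the inefficiency of Example \ref{ex:indiff-pe}: any allocation in which an $R$-agent holds $h$ and an $R'$-agent holds $h'$ is Pareto-dominated by swapping them and is blocked by that pair, so it is neither PE nor in the core, while the swapped allocation witnesses a nonempty core. I would build a market in which one agent $a$ has preference $R$ and owns $h$, one agent $b$ has preference $R'$ and owns $h'$, and $\succ_a$ ranks $a$ first so that $a$ retains $h$; the goal is to \emph{strand} $b$ at $h'$ by assigning every remaining house to an owner (with a preference drawn from $\rcal$) who will never cede to $b$ any house $b$ ranks above $h'$. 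One then checks that $\text{TTC}_\succ$ returns an allocation with $a$ holding $h$ and $b$ holding $h'$, so the same instance defeats both PE and CS.

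The main obstacle is Lemma (B). The two-agent heart is exactly Example \ref{ex:indiff-pe}, but stranding $b$ for an \emph{arbitrary} $\rcal$ is delicate: the only preferences available for the other owners are those in $\rcal$, and a badly chosen owner of a house that $b$ prefers to $h'$ may be willing to trade it to $b$ (for instance if that owner is itself indifferent between its house and $h'$), which can make the resulting \ttc outcome efficient and destroy the construction. The construction must therefore route ownership of each such house to an agent whose preference keeps it out of $b$'s reach—e.g.\ an owner who strictly prefers its own house to $h'$ and hence self-cycles—and the crux is arguing that a suitable assignment always exists. A secondary obstacle is the capacitated TTC core argument underlying Lemma (A), which requires care in tracking which copies of a block are exhausted in which round.
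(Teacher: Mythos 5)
Your high-level assembly (two lemmas plus the set-theoretic observation about supersets of $\rcal(\hcal)$) is sound and matches the paper's architecture, but your Lemma (A) is unprovable as stated, and this is a genuine gap rather than a missing detail. You propose to show that $\text{TTC}_{\succ}(R)$ lies in the core for \emph{every} $\succ$ and every $R\in\rcal(\hcal)^{n}$, deducing CS immediately and PE via the standard implication core $\Rightarrow$ PE. But under objective indifferences the core can be empty --- the paper's Example \ref{ex:emptycore} is exactly such a market --- so no mechanism can always select a core allocation, and your block-economy argument, were it valid, would prove the false claim that the (strict) core of the capacitated block economy is always nonempty. The classical strict-preference TTC core argument does not ``adapt to capacities'' precisely because duplicate units reopen blocking via weak improvements; what survives is only weak-core membership (Proposition \ref{prop:weakcore}). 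What CS actually requires, and what the paper proves, is conditional: \emph{if} some $y$ is in the core, then the TTC outcome $x$ satisfies $x_{i}I_{i}y_{i}$ for all $i$ (induction on executed cycles, using Fact \ref{fact:assignedbefore} plus a counting argument inside the block $\eta(y_{j})$), whence $x$ is in the core too. PE then needs its own direct argument --- it cannot be inherited from core membership that may fail to exist --- and the paper supplies a parallel one.

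On Lemma (B) you correctly identify the two obstacles but leave both open, and they are the actual technical content of the necessity direction. The ``stranding'' problem is resolved by the paper's Lemma \ref{lem:xw}: with self-first tie-breaking, assign $R_{\alpha}$ to $A=\{i:w_{i}R_{\alpha}w_{1}\}\setminus\{2\}$ and $R_{\beta}$ to $A^{c}$; then \ttc returns the endowment $w$, which is Pareto dominated by swapping agents $1$ and $2$. Note this uses only the two orderings $R_{\alpha},R_{\beta}$ guaranteed to lie in $\tilde{\rcal}$, which sidesteps your worry about which preferences are available for the other owners. More seriously, for the CS half your assertion that ``the swapped allocation witnesses a nonempty core'' is unproven and is not automatic: the paper must first normalize $h_{1}$ to be $R_{\beta}$'s highest-ranked house within the $I_{\alpha}$-class of $h_{2}$ (condition $(*)$) and then rule out blocking coalitions via three separate claims. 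Without some such choice the dominating allocation need not be in the core, in which case the core could be empty and the instance would not refute core-selection at all, since CS is vacuous on markets with empty cores. So the proposal fails on the sufficiency direction and leaves the necessity direction's two decisive steps as acknowledged gaps.
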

\begin{proof}
\ref{paretocoreappendix}.
\end{proof}
The full proof is in the appendix, but the intuition is simple. The objective indifferences domain precludes possibilities such as Example \ref{ex:indiff-pe}, and any larger domain inevitably introduces the
possibility of such a pair.

It follows from \citet{Sonmez99} that under objective indifferences, the core of a market is \textit{essentially single-valued} when it exists. That is, for any two allocations $x$ and $y$ in the core of a market, we have $x_{i}I_{i}y_{i}$ for all agents $i$. In our proof of Theorem \ref{thm:pecs}, we also prove this claim directly. Since the core is essentially single-valued, under objective indifferences the core can be interpreted as a unique mapping from agents to house types. In other words, the core allocations are permutations of one another, where in each core allocation an agent always receives a house from the same indifference class.
\begin{cor}
\label{cor:uniquecore} If $x$ and $y$ are in the core of an objective indifferences market, then $x_{i}I_{i}y_{i}$ for all $i\in N$.
\end{cor}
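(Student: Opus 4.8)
The plan is to prove a slightly stronger statement from which the corollary is immediate: fix any tie-breaking profile $\succ$ and let $z=\text{TTC}_{\succ}(R)$; then \emph{every} core allocation $x$ satisfies $\eta(x_{i})=\eta(z_{i})$ for all $i\in N$. Given this, for two core allocations $x,y$ we get $\eta(x_{i})=\eta(z_{i})=\eta(y_{i})$, i.e.\ $x_{i}I_{i}y_{i}$, so it suffices to compare every core allocation against the single reference allocation $z$. I would prove the stronger claim by induction on $n=|N|$ (with the base case $n\le 1$ trivial), mirroring the classical Roth--Postlewaite uniqueness argument for strict preferences, but tracking house \emph{types} $\eta(\cdot)$ rather than individual houses.

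Let $C_{1}\subseteq N$ be the (nonempty) set of agents matched in the first round of TTC run on $R_{\succ}$. Each $i\in C_{1}$ receives $z_{i}$, the $R_{i,\succ}$-best house, which necessarily lies in $i$'s top indifference block under $R_{i}$, since tie-breaking only refines $R_{i}$ within blocks. The first key step is to show that in any core allocation $x$, every $i\in C_{1}$ also receives a house of its top type, i.e.\ $\eta(x_{i})=\eta(z_{i})$. Suppose not. The round-$1$ cycles trade exactly the endowments of $C_{1}$ among themselves, so $z_{C_{1}}=w_{C_{1}}$; take $Q=C_{1}$ and the allocation $z|_{C_{1}}$. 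Every $z_{j}$ is in $j$'s top block, so $z_{j}R_{j}x_{j}$ for all $j\in C_{1}$, and the agent whose type differs has a strict preference. By Definition \ref{defn:core} this blocks $x$, contradicting $x\in\text{core}$. Hence $\eta(x_{i})=\eta(z_{i})$ for all $i\in C_{1}$.

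The main obstacle is the inductive step. In the strict-preference proof one simply deletes $C_{1}$ together with their houses, because there $x_{C_{1}}=w_{C_{1}}$ exactly; here we know only $\eta(x_{i})=\eta(z_{i})$, so the specific houses $x_{C_{1}}$ may be owned by agents outside $C_{1}$, and naive deletion fails. To repair this, I would replace $x$ by a type-equivalent core allocation $\tilde{x}$ that agrees with $w$ on $C_{1}$. Since $x$ and $z$ assign $C_{1}$ the same multiset of types, they leave $N\setminus C_{1}$ the same multiset of types; permuting houses within each block then yields $\tilde{x}$ with $\eta(\tilde{x}_{i})=\eta(x_{i})$ for all $i$ and $\tilde{x}_{C_{1}}=w_{C_{1}}$. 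Because preferences cannot distinguish houses of a common type, a coalition blocks $\tilde{x}$ iff it blocks $x$, so $\tilde{x}$ is again in the core. Now $\tilde{x}|_{N'}$ is a core allocation of the reduced objective-indifferences market $(N',\,H\setminus w_{C_{1}},\,w|_{N'},\,R|_{N'})$ with $N'=N\setminus C_{1}$, whose TTC output is exactly $z|_{N'}$. As $|N'|<n$, the induction hypothesis gives $\eta(\tilde{x}_{i})=\eta(z_{i})$, hence $\eta(x_{i})=\eta(z_{i})$, for all $i\in N'$; combined with the round-$1$ conclusion this covers all of $N$. This reference-based induction also sidesteps having to analyze the joint cycle structure of two arbitrary core allocations directly; the renaming trick in the third paragraph is the one genuinely new ingredient relative to the strict-preference argument, and is where I expect the care to be needed.
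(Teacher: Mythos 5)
Your proposal is correct, but it takes a genuinely different route from the paper's. The paper obtains Corollary \ref{cor:uniquecore} as a byproduct of the core-selection direction of Theorem \ref{thm:pecs}: fixing $x=\text{TTC}_{\succ}(R)$ and an arbitrary core allocation $y$, it inducts along the cycles $S_{1},S_{2},\ldots$ of the TTC execution \emph{within the fixed market}, proving $x_{i}I_{i}y_{i}$ cycle by cycle. Its inductive step is a pigeonhole argument driven by Fact \ref{fact:assignedbefore}: if $y_{j}P_{j}x_{j}$ for some $j\in S_{k}$, then every house of type $\eta(y_{j})$ was assigned in earlier cycles, so some earlier agent $\ell$ would have to receive a different type under $y$, contradicting the inductive hypothesis; blocking by the cycle $S_{k}$ (whose members trade their own endowments, so $x_{S_{k}}=w_{S_{k}}$) then upgrades the resulting weak preference to indifference. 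You share the skeleton --- compare every core allocation to the single TTC reference, and use blocking by the first-round cycles to force type agreement on $C_{1}$ --- but your inductive step is instead a Roth--Postlewaite-style recursion on $|N|$, and its new ingredients all check out: the multiset-of-types count guarantees the within-block permutation producing $\tilde{x}$ with $\tilde{x}_{C_{1}}=w_{C_{1}}$ exists; type-equivalent allocations have identical blocking coalitions, so $\tilde{x}$ is again in the core; the reduced market is an objective-indifferences market under the trace partition; TTC commutes with deletion of the round-1 agents; and the induction hypothesis covers the restricted tie-breaking profile because your claim is quantified over all $\succ$. (Two harmless informalities: the blocking ``allocation'' $z|_{C_{1}}$ in your first step should be extended to a full bijection off $C_{1}$, which Definition \ref{defn:core} permits since only $y_{Q}$ is constrained; and $\tilde{x}_{C_{1}}=w_{C_{1}}$ is set equality, which is all the reduction needs.) As for what each approach buys: the paper's cycle induction never leaves the original market, so it needs no renaming or market-reduction lemmas; your version makes the analogy with the classical strict-preference uniqueness argument explicit and isolates precisely where objective indifferences enters --- houses of a common type are interchangeable for \emph{every} agent, which is exactly what legitimizes the renaming trick.
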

\begin{proof}
\ref{paretocoreappendix}.
\end{proof}
Though all \ttc mechanisms are core-selecting under objective indifferences, the core of the market may still be empty, as the following simple example shows.
\begin{example}
\label{ex:emptycore} Let $N=\{1,2,3\}$. It is easy to verify that for the preference profile $R=(R_{1},R_{2},R_{3})$ shown below, there are no core allocations.

\begin{figure}[h!tbp]
\centering \begin{tabular}{ccc}
    $R_{1}$  & $R_{2}$ & $R_{3}$\\
    \hline 
    $w_{2}, w_{3}$  & $w_{1}$ & $w_{1}$\\
    $w_{1}$  & $w_{2}, w_{3}$ &  $w_{2}, w_{3}$
\end{tabular}

\end{figure}

\noindent Any allocation $x$ such that $x_{1}\in\{w_{1},w_{2}\}$
is blocked by $Q=\{1,3\}$ and $y_Q=(w_{3},w_{1})$. Similarly, any allocation $x$ such that $x_{1}=w_{3}$ is blocked by $Q=\{1,2\}$ and $y_Q=(w_{2},w_{1})$.
\end{example}
Even when the core of an objective indifferences market is empty, all \ttc mechanisms select an allocation in the weak core of the market. This is a known fact even under general indifferences; we reproduce it here and provide a proof for convenience.
\begin{prop}
\label{prop:weakcore} For any market, the weak core is non-empty and \ttc mechanisms select an allocation in the weak core.
\end{prop}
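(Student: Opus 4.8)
The plan is to prove both claims in Proposition \ref{prop:weakcore} by analyzing the output of any single $\text{TTC}_{\succ}$ mechanism. Since $\text{TTC}_{\succ}(R) = \text{TTC}(R_{\succ})$ runs ordinary top trading cycles on a strict preference profile, I would first appeal to the classical result of \citet{RP77} (or \citet{SS74}): for strict preferences, the TTC allocation lies in the (strict) core. This immediately shows the core of the tie-broken market $(N,H,w,R_{\succ})$ is non-empty and contains $\text{TTC}(R_{\succ})$. The remaining work is to transfer this conclusion from the tie-broken market back to the original market $(N,H,w,R)$, specifically to show membership in the \emph{weak} core of $R$. The non-emptiness claim then follows for free, since $\text{TTC}_{\succ}(R)$ exhibits a weak-core allocation.

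The key observation linking the two markets is that $R_{i,\succ}$ refines $R_i$: whenever $h\,P_i\,h'$, we also have $h\,P_{i,\succ}\,h'$, and ties under $R_i$ are merely broken (never reversed relative to a strict preference). The crucial consequence is the following one-directional implication on strict preference: for any houses $h,h'$, if $h\,P_i\,h'$ then $h\,P_{i,\succ}\,h'$; contrapositively, if $h\,R_{i,\succ}\,h'$ fails to be strict-improvement-detecting, we can still recover information about $R_i$. The precise fact I would isolate and prove as a short lemma is: if $h\,P_i\,h'$ under the original (weak) preferences, then $h\,P_{i,\succ}\,h'$ under tie-broken preferences.

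With this fact in hand, I would argue by contraposition that $x := \text{TTC}_{\succ}(R)$ is in the weak core of $R$. Suppose for contradiction that $x$ is weakly blocked in the original market: there exist a coalition $Q$ and allocation $y$ with $y_Q = w_Q$ and $y_i\,P_i\,x_i$ for \emph{all} $i \in Q$. Because the blocking here is \emph{strict} for every member of $Q$ (this is exactly where the weak-core definition helps, and why the ordinary core would not transfer), the refinement fact gives $y_i\,P_{i,\succ}\,x_i$ for all $i \in Q$. Thus the same pair $(Q,y)$ weakly blocks $x$ in the tie-broken market $(N,H,w,R_{\succ})$. But under the strict profile $R_{\succ}$, the core and weak core coincide, and $x = \text{TTC}(R_{\succ})$ lies in that core by \citet{RP77} --- a contradiction. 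Hence $x$ is in the weak core of $R$, proving both that $\text{TTC}_{\succ}$ selects a weak-core allocation and that the weak core is non-empty.

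The main obstacle is conceptual rather than computational: one must use \emph{strict} blocking in an essential way. An ordinary (non-weak) blocking coalition in the original market requires only $y_i\,R_i\,x_i$ for all $i\in Q$ with strict preference for just one member, and the refinement implication does not upgrade weak preferences $y_i\,R_i\,x_i$ into $y_i\,R_{i,\succ}\,x_i$ in the needed direction when $y_i\,I_i\,x_i$ (tie-breaking could go the ``wrong'' way). This is precisely why $\text{TTC}_{\succ}$ need not select a core allocation under general indifferences (cf. Example \ref{ex:indiff-pe}) yet always selects a weak-core allocation. I would make sure to state the refinement lemma cleanly and to emphasize that it is the universal strictness in the weak-blocking definition that makes the argument go through.
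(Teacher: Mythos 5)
Your proposal is correct and follows essentially the same route as the paper: strict comparisons survive tie-breaking ($y_i\,P_i\,x_i$ implies $y_i\,P_{i,\succ}\,x_i$), so a weak blocking coalition against $x=\text{TTC}_{\succ}(R)$ would also block $\text{TTC}(R_{\succ})$ in the strict tie-broken market, contradicting the core property of TTC under strict preferences \citep{RP77}, with non-emptiness following because the mechanism always outputs an allocation. One small caveat: you need not (and should not) assert that the core and weak core coincide under strict preferences --- the weak core can be strictly larger even then --- only that a blocking coalition that is strict for \emph{all} members is in particular a blocking coalition, which already contradicts core membership.
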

\begin{proof}
\ref{paretocoreappendix}.
\end{proof}

\subsection{Group strategy-proofness}

\ttc also loses group strategy-proofness once we move from strict
preferences to weak preferences. However, in the intermediate case of objective indifferences, \ttc recovers group strategy-proofness. Further, \ttc mechanisms are not GSP in any larger ``symmetric'' domain. We say that a domain $\rcal$ is symmetric if, when $h_{1}P_{i}h_{2}$ is possible, then so is $h_{2}P_{i}h_{1}$. We will informally argue that this is not an onerous modeling restriction.

First we present a simple example demonstrating that under general indifferences, \ttc mechanisms are not group strategy-proof. Notice that in Example \ref{ex:indiff-pe}, agent 1 can misreport $w_{2}R_{1}'w_{1}$ to benefit agent 2 without harming himself. As a less trivial example, we also provide Example \ref{ex:indiff-gsp}.
\begin{example}
\label{ex:indiff-gsp}

Let $N=\{1,2,3\}$ and let $Q=\{1,3\}$. For the preference profile $R=(R_{1},R_{2},R_{3})$ and tie-breaking profile $\succ=(\succ_{1},\succ_{2},\succ_{3})$ shown below, the \ttc allocation is $x=(w_{2},w_{1},w_{3})$. However, if agent 1 were to report $R'_{1}$, then for $R'=(R'_{1},R_{2},R_{3})$ the \ttc allocation is $y=(w_{3},w_{2},w_{1})$. Note that $y_{3}P_{3}x_{3}$ and $y_{1}I_{1}x_{1}$, so \ttc is not GSP.

\begin{figure}[H]
\centering

\begin{subfigure}[t]{0.3\textwidth}
    \centering
    \centering
\setlength{\tabcolsep}{6pt}
\begin{tabular}{ccc}
    $R_{1}$  & $R_{2}$  & $R_{3}$  \\
    \hline 
    $w_{2},w_{3}$  & $w_{1}$  & $w_{1}$  \\
    $w_{1}$        & $w_{2}$  & $w_{2}$  \\
                   & $w_{3}$  & $w_{3}$  
\end{tabular}
    \caption*{\small Preference profile $R$}
\end{subfigure}
\hfill
\begin{subfigure}[t]{0.3\textwidth}
    \centering
    \centering
\setlength{\tabcolsep}{6pt}
\begin{tabular}{ccc}
    $R'_{1}$  & $R_{2}$ & $R_{3}$ \\
    \hline 
    $w_{3}$  & $w_{1}$ & $w_{1}$ \\
    $w_{2}$  & $w_{2}$ & $w_{2}$ \\
    $w_{1}$  & $w_{3}$ & $w_{3}$
\end{tabular}
    \caption*{\small Preference profile $R'$}
\end{subfigure}
\hfill
\begin{subfigure}[t]{0.3\textwidth}
    \centering
    \centering
\setlength{\tabcolsep}{6pt}
\begin{tabular}{c}
    $\succ_{1} = \succ_{2} = \succ_{3}$ \\
    \hline 
    $1$ \\
    $2$ \\
    $3$
\end{tabular}
    \caption*{\small Tie-breaking profile $\succ$}
\end{subfigure}
\end{figure}

\end{example}

Objective indifferences excludes situations like Example \ref{ex:indiff-gsp} in two ways. First, it eliminates the possibility that one agent is indifferent between two houses while another agent has a strict preference. Second, it constrains the possible set of misreports available to a manipulating coalition, since agents can \textit{only} report indifference among all houses of the same type. Our next result characterizes the set of symmetric-maximal domains on which \ttc mechanisms are GSP.

Before presenting our result, we define ``symmetric'' and ``symmetric-maximal'' domains.
\begin{defn}
A domain $\rcal$ is \textbf{symmetric} if for any $h_{1},h_{2}\in H$, if there exists $R_{i}\in\rcal$ such that $h_{1}P_{i}h_{2}$, then there also exists $R_{i}'\in\rcal$ such that $h_{2}P_{i}'h_{1}$.
\end{defn}
\begin{defn}
A domain $\rcal$ is \textbf{symmetric-maximal} for Pareto efficiency and a class of mechanisms $F$ if
\begin{enumerate}
\item $\rcal$ is symmetric,
\item every $f \in F$ is Pareto efficient on $\rcal$, and 
\item for any symmetric $\tilde{\rcal}\supset\rcal$, there is some $f\in F$ that is \emph{not} Pareto efficient on $\tilde{\rcal}$.
\end{enumerate}
The definition for any axiom besides Pareto efficiency is again analogous.
\end{defn}
In most practical applications, symmetry is a natural restriction to place on the domain. If it is possible that agents might report strictly preferring some house $h$ to another house $h'$, we should not preclude the possibility they strictly prefer $h'$ to $h$. Indeed, a central principle of mechanism design is that preferences are unknown and must be elicited. It is easy to see that objective indifferences domains are symmetric. Compared to maximality, symmetric-maximality restricts the possible expansions of objective indifferences domains that we must consider. The symmetry restriction is relatively weak; it does not enforce strict preferences nor general indifferences, as the next example illustrates.

\begin{example}
Let $H=\{h_{1},h_{2},h_{3}\}$ and $\mathcal{H}=\left\{ \{h_{1}\},\{h_{2},h_{3}\}\right\} $. Consider the objective indifferences domain $\mathcal{R}(\mathcal{H})=\left\{ h_{1}Ph_{2}Ih_{3},h_{2}Ih_{3}Ph_{1}\right\} $. Let $R_{\alpha}=h_{1}Ph_{2}Ph_{3}$ and $R_{\beta}=h_{1}Ph_{3}Ph_{2}$. Then $\mathcal{R}(\mathcal{H})\cup\{R_{\alpha}\}$ is not a symmetric domain, since it allows $h_{2}Ph_{3}$ but not $h_{3} P h_{2}$. However, $\mathcal{R}(\mathcal{H})\cup\{R_{\alpha},R_{\beta}\}$ \emph{is} a symmetric domain. Further, notice that $\mathcal{R}(\mathcal{H})\cup\{R_{\alpha},R_{\beta}\}$ is not general indifferences nor strict preferences. Nor does it belong to some other objective indifferences domain, as it allows agents to report both $h_{2}Ih_{3}$ and $h_{2}Ph_{3}$.
\end{example}
Our second main result is that objective indifferences domains are the only symmetric-maximal domains on which every \ttc mechanism is group strategy-proof.
\begin{thm}
\label{thm:gsp} $\rcal$ is a symmetric-maximal domain on which \ttc mechanisms are group strategy-proof if and only if $\rcal$ is an objective indifferences domain.
\end{thm}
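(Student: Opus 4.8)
The plan is to mirror the architecture of Theorem~\ref{thm:pecs} and reduce the biconditional to three ingredients: (i) a \emph{positive} result, that every \ttc mechanism is GSP on any objective indifferences domain $\rcal(\hcal)$; (ii) a \emph{containment lemma}, that any symmetric domain on which every \ttc mechanism is GSP must be contained in some $\rcal(\hcal)$; and (iii) the elementary fact that distinct objective indifferences domains are never nested, since reading the indifference classes off any single order shows $\rcal(\hcal)\subseteq\rcal(\hcal')$ forces $\hcal=\hcal'$. Granting these, both directions are short. For the ``if'' direction, $\rcal(\hcal)$ is symmetric and satisfies GSP by (i); if some symmetric $\tilde\rcal\supsetneq\rcal(\hcal)$ still had all \ttc mechanisms GSP, then (ii) would place $\tilde\rcal\subseteq\rcal(\hcal')$, contradicting (iii). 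For the ``only if'' direction, a symmetric-maximal $\rcal$ is symmetric with all \ttc mechanisms GSP, so (ii) gives $\rcal\subseteq\rcal(\hcal)$; since $\rcal(\hcal)$ is itself symmetric with all \ttc mechanisms GSP by (i), maximality forbids the containment from being proper, so $\rcal=\rcal(\hcal)$.

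For ingredient (i) I would reduce to the classical group strategy-proofness of TTC under strict preferences (Moulin, Bird, P\'apai). Fix $\succ$, a true profile $R\in\rcal(\hcal)^n$, a coalition $Q$, and a deviation $R'=(R'_Q,R_{-Q})\in\rcal(\hcal)^n$, with $x=\ttc(R)$ and $y=\ttc(R')$. The key structural observation is that, under objective indifferences, every admissible report leaves the \emph{within-block} order of $R_{i,\succ}$ fixed (it is determined by $\succ_i$ alone), so only the ordering of blocks varies with the report, and each $R_{i,\succ}$ refines the true $R_i$. Any coalition member who is \emph{strictly} better under $R$ lands in a strictly preferred block, hence is strictly better under the refinement $R_{i,\succ}$ as well; this would be a profitable joint manipulation of the strict profile $R_\succ$, contradicting classical GSP. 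The delicate point is a member $i$ who is only \emph{weakly} better under the true $R$, i.e.\ $y_i\,I_i\,x_i$ with $y_i,x_i$ in the same block: such an agent may be ranked \emph{worse} under $R_{i,\succ}$, so the deviation need not be weakly improving at the tie-broken level and the reduction is not immediate. I would dispose of this slack using that same-block houses are perfect substitutes for \emph{every} agent: because the true welfare of each coalition member depends only on the block received, the manipulation can be recast at the level of blocks, and the classical argument rules out a coalition raising any member to a strictly better block.

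For ingredient (ii) I would show the contrapositive: a domain that escapes every $\rcal(\hcal)$ admits a ``subjective indifference,'' i.e.\ houses $h,h'$ and orders $R^a,R^b\in\rcal$ with $h\,I^a\,h'$ but (after relabeling) $h\,P^b\,h'$; symmetry then also furnishes $h'\,P^c\,h$. From these I would build a GSP violation following the template of Example~\ref{ex:indiff-gsp}: place an agent whose true preference is $R^a$ (indifferent between $h$ and $h'$, both above her endowment) so that fixed tie-breaking routes her into one cycle, and let her misreport a strict order separating $h$ from $h'$ to divert into an alternative cycle that strictly benefits a confederate who owns one of these houses and covets hers, while she remains genuinely indifferent. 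Since all agents draw from the common domain and the endowment bijection can be normalized to $w_i=h_i$ after relabeling, the roles can be assigned freely and $\succ$ chosen to make the honest and deviating outcomes differ. An attractive alternative is to route (ii) through the Pareto-efficiency containment already asserted before Theorem~\ref{thm:pecs}: it would suffice to show that on a symmetric domain a Pareto-efficiency failure of some \ttc mechanism yields a GSP failure of some \ttc mechanism, by implementing the Pareto-improving trading cycle as a joint misreport.

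The hard part will be ingredient (ii), and specifically guaranteeing that the manipulating profile and the deviation use \emph{only} preferences that actually belong to $\rcal$. The construction needs supporting agents with suitable reports (confederates who covet particular houses, and inert agents who retain their endowments), yet an arbitrary symmetric $\rcal$ may be sparse; the content of the symmetry hypothesis is precisely to supply the reversed strict comparison $h'\,P^c\,h$ and to keep the required embedding feasible. Making this embedding work for every symmetric $\rcal$ carrying a subjective indifference -- rather than just for a convenient rich domain -- is the crux, and it is here that I expect the real combinatorial work to lie; the positive direction's slack, by contrast, is resolved cleanly by the perfect-substitutes property.
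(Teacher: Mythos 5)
Your skeleton --- positive result, containment lemma, non-nestedness of objective indifferences domains --- is the same decomposition the paper uses, and you correctly isolate the delicate point in the positive direction: a coalition member with $y_{i}\,I_{i}\,x_{i}$ may be strictly \emph{worse} at the tie-broken level, so the deviation need not weakly improve the strict profile $R_{\succ}$ and classical GSP of TTC yields no direct contradiction. But your proposed repair, ``recast the manipulation at the level of blocks,'' is not an argument. There is no block-level Shapley--Scarf market to which classical GSP applies (a block is owned by several agents), and the paper's \ref{schoolchoiceappendix} shows explicitly that block-level TTC with capacities/priorities behaves differently from \ttc mechanisms, so the quotient you gesture at does not exist in the needed form. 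The paper closes this gap with two tools you do not supply: monotonicity of \ttc and the ``first mover'' lemma of \citet{ST24} (Lemma \ref{lem:st_lem1}). Concretely, the arbitrary deviation $R'$ is replaced via monotonicity by the canonical $R''$ in which each $i\in Q$ top-ranks $\eta(y_{i})$ (so $\text{TTC}_{\succ}(R'')=y$); Lemma \ref{lem:st_lem1} then produces $j$ and $h$ with $x_{j}P_{j,\succ}h$ and $hP''_{j,\succ}x_{j}$; objective indifferences force $h\notin\eta(x_{j})$, so these are \emph{true} strict comparisons, and since $R''_{j}$ only promotes $\eta(y_{j})$ it must be that $h\in\eta(y_{j})$, giving $x_{j}P_{j}y_{j}$ --- some coalition member is strictly harmed, which kills the manipulation. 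Without a substitute for this lemma (or an equally precise mechanism for handling the indifferent members), your ingredient (i) is incomplete at exactly its crux.

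On the maximality direction you assemble the right raw material ($R_{\alpha}$ with $h_{1}I_{\alpha}h_{2}$, $R_{\beta}$ with $h_{1}P_{\beta}h_{2}$, and $R_{\gamma}$ with $h_{2}P_{\gamma}h_{1}$ from symmetry, plus the Example \ref{ex:indiff-gsp} template), but you explicitly defer the embedding --- completing the profile using \emph{only} orders from the sparse domain $\tilde{\rcal}$ and verifying both the honest and deviating outcomes --- as ``the crux,'' and that deferred step is the actual content of the paper's proof. The paper resolves it with Lemma \ref{lem:xw}: partition $N$ into $A=\{i:w_{i}R_{\alpha}w_{1}\}\setminus\{2\}$, $B=\{i\in A^{c}:w_{i}R_{\beta}w_{1}\}\cup\{2\}$, and $C=(A\cup B)^{c}$, assign $R_{\alpha},R_{\beta},R_{\gamma}$ respectively, and choose self-prioritizing tie-breaking with $2\succ_{2}1$; Lemma \ref{lem:xw} pins down $x_{1}=w_{1}$ and $w_{1}P_{2}x_{2}$ under truth-telling, while the single misreport $R'_{1}=R_{\gamma}$ yields $y_{1}=w_{2}$ and $y_{2}=w_{1}$, so agent $1$ is truly indifferent and agent $2$ strictly gains. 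Note also that only three orders are ever needed, and the deviator's misreport is $R_{\gamma}$ itself (whose availability is exactly what symmetry buys); your sketch instead asks the deviator to report ``a strict order separating $h$ from $h'$,'' which an arbitrary symmetric domain need not contain. Finally, your alternative route through Pareto efficiency is not a shortcut: converting a Pareto improvement into a group manipulation still requires exhibiting reports \emph{inside} $\tilde{\rcal}$ that make some \ttc mechanism execute the improving cycle, which is again the $R_{\gamma}$-based construction above. So while your architecture and diagnosis of the difficulties match the paper, both load-bearing steps are left unproved.
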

\begin{proof}
\ref{gspappendix}.
\end{proof}
Our proof uses similar reasoning to the proof that TTC is group strategy-proof under strict preferences contained in \citet{ST24}. Any coalition requires a ``first mover'' to misreport, but this agent must receive an inferior house to the one he originally received. Under objective indifferences, this ``inferior'' house must actually be inferior according to his true preferences. Under more general indifferences, the situation in Example \ref{ex:indiff-pe} again inevitably arises.

In the following example, we note that objective indifferences domains are \emph{not} maximal domains on which $\text{TTC}_{\succ}$ mechanisms are GSP.
\begin{example}
    Let $N=\{1,2\}$, $H=\{h_{1},h_{2}\}$, and $\mathcal{H}=\{\{h_{1},h_{2}\}\}$. Suppose $\rcal'=\rcal(\mathcal{H})\cup\{h_{1}Ph_{2}\}=\left\{ h_{1}Ih_{2},h_{1}Ph_{2}\right\} $. That is, expand the objective indifferences domain induced by $\hcal$ to include the ordering $h_{1}Ph_{2}$. Note that this expanded domain is not symmetric, since $\rcal'$ does not also contain the preference ordering $h_{2}Ph_{1}$.
    
    Let $\succ=((1\succ_{1}2),(1\succ_{2}2))$. We will show that for any market $(N,H,w,R)$, \ttc is group strategy-proof. It is straightforward to show that the same is true for the remaining 3 possible tie-breaking profiles.
    
    Without loss of generality, assume $w_{i}=h_{i}$. If both agents have the same preferences, then there is clearly no profitable group manipulation. Consider the following two possible preference profiles:
    
    \begin{figure}[htbp]
    \centering \noindent
\makebox[\textwidth][c]{%
  \begin{minipage}[t]{0.35\textwidth}
    \centering
    \begin{tabular}{cc}
      $R_{1}$ & $R_{2}$ \\
      \hline 
      $h_{1}$ & $h_{1},h_{2}$ \\
      $h_{2}$ & 
    \end{tabular}
  \end{minipage}
  \hspace{2em}
  \begin{minipage}[t]{0.05\textwidth}
    \centering
    \text{or}
  \end{minipage}
  \hspace{2em}
  \begin{minipage}[t]{0.35\textwidth}
    \centering
    \begin{tabular}{cc}
      $R_{1}$ & $R_{2}$ \\
      \hline 
      $h_{1},h_{2}$ & $h_{1}$ \\
                   & $h_{2}$
    \end{tabular}
  \end{minipage}
}
    \end{figure}
    
    In the first case, the \ttc allocation is $x=(h_{1},h_{2})$, so both agents receive one of their most preferred houses. Therefore, it is not possible for either agent to strictly improve. In the second case, the \ttc allocation is $x=(h_{1},h_{2})$. It would benefit agent 2 for agent 1 to rank $h_{2}$ above $h_{1}$, since agent 1 is indifferent between $h_{1}$ and $h_{2}$. However, this is not possible since $(h_{2}Ph_{1})\notin\rcal'$.
\end{example}

Even though objective indifferences domains are not maximal for GSP, we have already argued that the symmetric restriction is not onerous. In most mechanism design settings, it is strange to allow participants to strictly prefer one object to another, but not vice versa.\footnote{One reasonable exception is contracts where the objects may include payments; e.g., a seat at a school with or without a scholarship.}


\section{Conclusion}

\label{sec:concl}

Our main set of results show that objective indifferences domains are the maximal domains on which \ttc mechanisms are Pareto efficient, core-selecting, and group strategy-proof. While it may not be surprising that objective indifferences preserves these properties, it is quite remarkable that the maximal domains on which \ttc satisfies these three distinct properties essentially coincide.

In markets where one could reasonably assume that any indifferences are shared among all agents, \ttc is a sensible choice of mechanism. Regardless of the tie-breaking rule, \ttc will be Pareto efficient, core-selecting, and group strategy-proof. While a number of other mechanisms generalize TTC and retain various properties in general indifferences, \ttc remains computationally efficient and easier to explain. However, in any more general domain of preferences, TTC inevitably loses some of its appeal. We do not interpret our results as necessarily prescriptive nor proscriptive. We simply characterize exactly when TTC retains its most desirable properties.

Our paper opens several new lines of inquiry. First, we believe that studying matching markets with constrained indifferences is an exciting avenue for future research. In many real-world matching markets, agents have indifferences, but often with a structure imposed by the particular market. Adding structure to general indifferences is not only theoretically interesting, but could improve policy choices. For instance, there may be tradeoffs in the selection of the partition $\mathcal{H}$ given the set of objects $H$. In some cases, there may be some ambiguity: are two dorms with the same floor plan but on different floors of the same building equivalent? Inappropriately combining indifference classes might lead to efficiency losses in the spirit of Example \ref{ex:indiff-pe}. On the other hand, splitting indifference classes might allow group manipulations like in Example \ref{ex:indiff-gsp}. We leave formal results as future work. We also leave an axiomatic characterization of \ttc on objective indifferences domains as future work.

\newpage{}

\newpage{}


\appendix

\global\long\def\thesection{Appendix \Alph{section}}%
\global\long\def\thesubsection{Appendix \Alph{section}.\arabic{subsection}}%
\global\long\def\thesubsubsection{Appendix \Alph{section}.\arabic{subsection}.\arabic{subsubsection}}%

\section{Proofs }\label{proofsappendix}

We provide proofs for the results in the main text. Note that individual rationality (IR) of $\text{TTC}_{\succ}$ follows immediately from IR of TTC and the fact that $\text{TTC}_{\succ}(R)\equiv \text{TTC}(R_{\succ})$. That is, TTC is IR according to $R_{\succ}$, which implies \ttc is IR according to $R$.

Given a market $(N,H,w,R)$ and mechanism $\text{TTC}_{\succ}$, let $S_{k}(R)$ be the agents in the $k$th cycle executed in the process of $\text{TTC}_{\succ}(R)$. Denote $\bar{S}_{k-1}(R)=\cup_{\ell=1}^{k-1}S_{\ell}(R)$. Where the risk of confusion is minimal (generally when not dealing with strategy-proofness), we suppress the dependence on $R$ and simply refer to $S_{k}$ and $\bar{S}_{k-1}$. While $\text{TTC}_{\succ}(R)$ may execute cycles in different orders, the cycles are always the same for a given market, so the order will be unimportant.

We will appeal to the following fact.
\begin{fact}
\label{fact:assignedbefore} Fix a market $(N,H,w,R)$ and a tie-breaking profile $\succ$. Let $x=\text{TTC}_{\succ}(R)$. If $i\in S_{k}$ and $x_{j}P_{i}x_{i}$, then $j\in\bar{S}_{k-1}$.
\end{fact}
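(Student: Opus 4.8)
The plan is to work entirely inside the ordinary TTC run on the tie-broken profile, using the defining identity $\text{TTC}_{\succ}(R)\equiv\text{TTC}(R_{\succ})$. Two ingredients drive the argument: first, that tie-breaking only refines $R_i$ and never reverses a strict comparison, so $x_j P_i x_i$ upgrades to $x_j P_{i,\succ}x_i$; second, that in ordinary TTC each agent is assigned his favorite \emph{still-available} house at the moment his cycle fires. I would set up both, then translate a statement about the house $x_j$ into a statement about the agent $j$.

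First I would verify the refinement. By construction of $R_{i,\succ}$, whenever $w_j P_i w_{j'}$ we have $w_j P_{i,\succ}w_{j'}$, since tie-breaking converts indifferences into strict comparisons but leaves existing strict comparisons intact. Applying this to the hypothesis $x_j P_i x_i$ yields $x_j P_{i,\succ}x_i$.

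Next I would recall the mechanics of TTC applied to the strict profile $R_{\succ}$. At the moment the cycle $S_k$ containing $i$ executes, the houses still in the graph are exactly $\{w_m : m\notin\bar{S}_{k-1}\}$, and $i$ points to the owner of his $R_{i,\succ}$-favorite available house, so $x_i$ is precisely that favorite. If $x_j$ were still available at round $k$, then because $x_j P_{i,\succ}x_i$ the house $x_i$ could not be $i$'s favorite available house, a contradiction. Hence the owner $m$ of $x_j$ (the agent with $w_m=x_j$) must already have been removed, i.e.\ $m\in\bar{S}_{k-1}$.

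The last and most delicate step is to pass from ``the owner of $x_j$ was removed before round $k$'' to ``$j$ was removed before round $k$.'' Here the key observation is that in any TTC cycle the agent who receives $w_m$ is exactly the predecessor of $m$ in that cycle; since $j$ receives $x_j=w_m$, agents $j$ and $m$ lie in the same cycle $S_\ell$ and are removed in the same round $\ell\le k-1$, giving $j\in S_\ell\subseteq\bar{S}_{k-1}$. I expect this bookkeeping to be the main obstacle: one must keep the envied house $x_j$ distinct from its owner $m$, and recognize that an agent and the owner of the house he receives always execute together, so that the round in which $x_j$ leaves the market coincides with the round in which $j$ is assigned.
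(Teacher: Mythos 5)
Your proof is correct and takes essentially the same route as the paper's (brief) argument: tie-breaking preserves strict comparisons, so $x_j P_i x_i$ gives $x_j P_{i,\succ} x_i$, and in $\text{TTC}(R_\succ)$ agent $i$ cannot be assigned $x_i$ while the better house remains available. Your final bookkeeping step --- that the house $x_j$ leaves the market in exactly the round its owner's cycle fires, which is the round $j$ is assigned --- is left implicit in the paper but is the same argument, just spelled out.
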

In words, if $i$ strictly prefers some object to $x_{i}$, that object must have been assigned in an earlier cycle. Fact \ref{fact:assignedbefore} follows from the definitions: $x_{j}P_{i}x_{i}$ implies $x_{j}P_{i,\succ}x_{i}$, and $\text{TTC}_{\succ}(R)\equiv\text{TTC}(R_{\succ})$. If a better object than $x_{i}$ is available, $i$ will point at it; thus $i$ cannot be assigned to $x_{i}$ until the better object is gone.

We also make use of the following technical lemma. We do not think it is of interest on its own, but it will be useful in the proceeding proofs.
\begin{lem}
\label{lem:xw} Fix $(N,H,w)$ and a domain $\rcal$. For any preference relations $R_{\alpha},R_{\beta}\in\rcal$ and (not necessarily distinct) houses $h_{1},h_{2}\in H$, suppose $A$ and $B$ satisfy \[ \{i\in N:w_{i}P_{\alpha}h_{1}\}\subseteq A\subseteq\{i\in N:w_{i}R_{\alpha}h_{1}\} \] and \[ \{i\in A^{c}:w_{i}P_{\beta}h_{2}\}\subseteq B\subseteq\{i\in A^{c}:w_{i}R_{\beta}h_{2}\}. \] That is, $A$ includes all agents whose endowments are strictly preferred to $h_{1}$ according to $R_{\alpha}$, and perhaps some whose endowments are indifferent. Among the remainder, $B$ includes all those whose endowments are strictly preferred to $h_{2}$ according to $R_{\beta}$, and perhaps some whose endowments are indifferent.

Let $R$ be any preference profile in $\rcal^n$ such that $R_{i}=R_{\alpha}$ for all $i\in A$ and $R_{i}=R_{\beta}$ for all $i\in B$. Let $\succ$ be any tie-breaking profile such that for all $i\in N$, $i\succ_{i}j$ for all $j\neq i$. That is, every agent's tie-breaking rule prioritizes himself first. Denote $x = \text{TTC}_{\succ}(R)$. Then $x_{i}=w_{i}$ for all $i\in A\cup B$.
\end{lem}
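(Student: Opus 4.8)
The plan is to exploit three features of the setup simultaneously: every agent in $A$ shares the true preference $R_\alpha$, every agent in $B$ shares $R_\beta$, and the self-prioritizing tie-break makes each agent rank his own endowment above every house he merely finds indifferent to it. The first step I would record is the precise consequence of the tie-break: for any $i$ and any house $h=w_j$ with $j\neq i$, we have $h\,P_{i,\succ}w_i$ if and only if $h\,P_i w_i$ in the \emph{true} preference, because the indifference case $w_j I_i w_i$ is killed by $i\succ_i j$. Thus the only houses $i$ can strictly prefer to his endowment under $R_{i,\succ}$ are those he strictly prefers under $R_i$. The proof then shows that every agent in $A\cup B$ is eventually assigned through a self-cycle in $\text{TTC}(R_\succ)$, which is exactly the claim $x_i=w_i$.

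First I would handle $A$. For $i\in A$ we have $R_i=R_\alpha$ and $w_i R_\alpha h_1$, so any house $h$ with $h\,P_i w_i$ satisfies $h\,P_\alpha h_1$ by transitivity; since every house is some $w_j$, its owner lies in $\{j:w_j P_\alpha h_1\}\subseteq A$. Combined with the tie-break observation, this shows that at every stage an agent of $A$ points either to himself or to another agent of $A$, so any cycle meeting $A$ is contained in $A$. Because all agents in $A$ share $R_\alpha$, a nontrivial cycle $i_1\to\cdots\to i_m\to i_1$ inside $A$ would force $w_{i_{\ell+1}}\,P_{i_\ell,\succ}w_{i_\ell}$, hence $w_{i_{\ell+1}}\,P_\alpha w_{i_\ell}$ for each $\ell$ (using that $w_{i_\ell}$ is still available and the tie-break observation), and chaining these strict comparisons around the cycle yields $w_{i_1}\,P_\alpha w_{i_1}$, a contradiction. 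Hence every cycle meeting $A$ is a self-cycle, so $x_i=w_i$ for all $i\in A$.

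Next I would treat $B$, where the new wrinkle is that an agent $i\in B$ may point at an agent of $A$, since the owners of houses strictly better than $w_i$ lie in $\{j:w_j P_\beta h_2\}$, which need not sit inside $A^c$. The key is that such cross-edges can never lie on a cycle. Consider any cycle $C$ containing some $i_1\in B\subseteq A^c$. If $C$ met $A$, the previous paragraph would trap $C$ inside $A$ so that it could not return to $i_1$; hence $C\subseteq A^c$. Walking along $C$ and using $w_{i_\ell}R_\beta h_2$ with the tie-break observation, each successor satisfies $w_{i_{\ell+1}}\,P_\beta h_2$, and since it also lies in $A^c$ it belongs to $\{j\in A^c:w_j P_\beta h_2\}\subseteq B$; thus $C\subseteq B$. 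The shared preference $R_\beta$ then gives the same strict-cycle contradiction as before unless $C$ is a self-cycle, so $x_i=w_i$ for all $i\in B$, completing the proof.

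The part I expect to be the main obstacle is the $B$ step: unlike $A$, the set $B$ is not closed under pointing, so I cannot simply restrict to a sub-market on $B$. The crux is recognizing that $A$ behaves as an absorbing collection of self-loops, so any edge from $B$ into $A$ is never part of a cycle; once that is in hand, the membership bookkeeping ($A^c$ together with $w_j\,P_\beta h_2$ forcing a vertex into $B$) and the identical-preferences contradiction run exactly as in the $A$ step. I would also check the degenerate cases permitted by the hypotheses -- $h_1=h_2$, and agents admitted to $A$ or $B$ only through the weak (indifferent) inclusion -- but every transitivity step uses only $w_i R_\alpha h_1$ or $w_i R_\beta h_2$, so they go through unchanged.
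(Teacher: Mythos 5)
Your proof is correct, and it takes a genuinely different (though related) route from the paper's. The paper argues statically at the level of the final allocation: it first notes that self-first tie-breaking implies $x_i = w_i$ whenever $x_i I_i w_i$, so that by individual rationality the set $U=\{i\in A: x_i\neq w_i\}$ consists of strict improvers; it then picks $i\in U$ whose endowment is $R_\alpha$-best within $U$ and observes that the owner $j$ of $x_i$ satisfies $w_j P_\alpha w_i R_\alpha h_1$, hence $j\in A$ and $j\in U$ with a strictly better endowment, contradicting extremality; the $B$ step repeats this with $R_\beta$, using the $A$ conclusion to place $j$ in $A^c$. You instead argue dynamically on the TTC digraph: your tie-break observation (that $h\,P_{i,\succ}\,w_i$ iff $h\,P_i\,w_i$) shows unassigned $A$-agents always point at themselves or inside $A$, so any cycle meeting $A$ is trapped in $A$, and chaining $w_{i_{\ell+1}}P_\alpha w_{i_\ell}$ around a nontrivial cycle yields $w_{i_1}P_\alpha w_{i_1}$; your $B$ step correctly isolates the one real wrinkle---cross-edges from $B$ into $A$ exist but can never lie on a cycle---and then bootstraps membership in $B$ around the cycle before running the same chaining contradiction. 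The two contradiction engines are interchangeable here (extremal choice of a best-endowed deviator versus acyclicity of strict preference around a trading cycle), and both proofs share the $A$-then-$B$ decomposition and the same key consequence of self-first tie-breaking; the paper's version avoids any reasoning about the execution beyond IR and the indifference fact, while yours makes the mechanism visible (under the hypotheses, agents in $A\cup B$ are assigned only through self-loops) at the cost of some cycle bookkeeping, which you handle correctly, including the degenerate cases $h_1=h_2$ and agents admitted only via the weak inclusions.
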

\begin{proof}[Proof of Lemma \ref{lem:xw}]
We first note that since $i\succ_{i}j$ for all $i\neq j$, an immediate consequence of $\text{TTC}_{\succ}$ is that if $x_{i}I_{i}w_{i}$, then $x_{i}=w_{i}$. To see this, observe that if $w_{i}$ is available, $i$ will never point to any other $hI_{i}w_{i}$.

First we show that $x_{i}=w_{i}$ for all $i\in A$. Toward a contradiction, suppose that $U:=\{i\in A:x_{i}\neq w_{i}\}$ (for ``upper'') is non-empty. IR and the fact above imply $U=\{i\in A:x_{i}P_{i}w_{i}\}$. Fix some $i\in U$ such that $w_{i}R_{\alpha}w_{i'}$ for all $i'\in U$; informally, $i$'s endowment is one of $U$'s favorites among their endowments. Since $i\in A$, by construction $R_{i}=R_{\alpha}$; therefore $x_{i}P_{i}w_{i}$ implies $x_{i}P_{\alpha}w_{i}$. Consider $j\in N$ endowed with $x_{i}$; i.e., $x_{i}=w_{j}$. We have $(x_{i}=)w_{j}P_{\alpha}w_{i}$ and $w_{i}P_{\alpha}h_{1}$ by construction, so $j\in A$. We also have $x_{j}\neq w_{j}$, so $j\in U$. But since $w_{j}P_{\alpha}w_{i}$, $j \in U$ contradicts the assumption that $w_{i}R_{\alpha}w_{i'}$ for all $i'\in U$.

Next we show that $x_{i}=w_{i}$ for all $i\in B$. This follows nearly the same logic as above. Suppose that $U=\{i\in B:x_{i}\neq w_{i}\}=\{i\in B:x_{i}P_{i}w_{i}\}$ is non-empty. Fix some $i\in U$ such that $w_{i}R_{\beta}w_{i'}$ for all $i'\in U$. Since $i\in B$, by construction $R_{i}=R_{\beta}$; therefore, $x_{i}P_{i}w_{i}$ implies $x_{i}P_{\beta}w_{i}$. Consider $j\in N$ endowed with $x_{i}$; i.e., $x_{i}=w_{j}$. It must be that $j\in A^{c}$, because we showed that $x_{i}=w_{i}$ for all $i\in A$ and $x_{j}\neq w_{j}$. We have $w_{j}P_{\beta}w_{i}$ and $w_{i}P_{\beta}h_{2}$ by construction, so $j\in B$. We also have $x_{j}\neq w_{j}$, so $j\in U$. But since $w_{j}P_{\beta}w_{i}$, $j \in U$ contradicts the assumption that $w_{i}R_{\beta}w_{i'}$ for all $i'\in U$.
\end{proof}

\subsection{Pareto efficiency and core-selecting }\label{paretocoreappendix}

\textbf{Theorem \ref{thm:pecs}.} \emph{The following are equivalent:
\begin{enumerate}
\item $\rcal$ is an objective indifferences domain. \item $\rcal$ is a maximal domain on which \ttc mechanisms are Pareto efficient. \item $\rcal$ is a maximal domain on which \ttc mechanisms are core-selecting.
\end{enumerate}}
\noindent Fix $(N,H,w)$. The result is trivial for $|N|=1$, so assume
$|N|\geq2$.
\begin{proof}[Proof of 1$\iff$ 2]
First we show that for any objective indifferences domain, all \ttc mechanisms are PE. Fix some tie-breaking rule $\succ$. Let $\mathcal{H}$ be any partition of $H$ and let $R\in\mathcal{R}(\mathcal{H})^{n}$. If $\mathcal{H}=\{H\}$, the result is trivial, so suppose the partition has at least two blocks.

Let $x=\text{TTC}_{\succ}(R)$, and suppose that some feasible allocation $y$ Pareto dominates $x$. Let $U=\{i\in N:y_{i}P_{i}x_{i}\}$ be the set of agents who are strictly better off under $y$, which must be non-empty. Let $k$ be the first step in the process of $\text{TTC}_{\succ}(R)$ that an agent in $U$ is assigned. Formally, $\bar{S}_{k-1}\cap U=\emptyset$ and $S_{k}\cap U\neq\emptyset$. Consider some $j\in S_{k}\cap U$. We have $y_{j}P_{j}x_{j}$, so Fact \ref{fact:assignedbefore} implies that $\{i:x_{i}\in\eta(y_{j})\}\subseteq\bar{S}_{k-1}$. That is, anyone assigned to a house in $\eta(y_{j})$ was assigned before $j$. Because $\eta(y_{j})$ is finite and $\eta(y_{j})\neq\eta(x_{j})$, there must be an agent $\ell\in\bar{S}_{k-1}$ for whom $x_{\ell}\in\eta(y_{j})$ but $y_{\ell}\notin\eta(y_{j})$. In words, in order to assign $j$ to $y_{j}$, someone who originally received an object in $\eta(y_{j})$ under $x$ must receive something else under $y$. Therefore $\neg(y_{\ell}I_{\ell}x_{\ell})$. Since $y$ Pareto dominates $x$, it must be that $y_{\ell}P_{\ell}x_{\ell}$. But then $\ell\in U$, and $\ell$ was assigned before cycle $k$, a contradiction.

We now turn to the maximality claim. We show that for any domain $\tilde{\rcal}$ where $\tilde{\rcal}\nsubseteq\rcal(\hcal)$ for any partition $\hcal$ of $H$, there is some \ttc mechanism which is not PE on $\tilde{\rcal}$. If $\tilde{\rcal}\nsubseteq\mathcal{R}(\mathcal{H})$ for any $\hcal$, then $\tilde{\rcal}$ must contain two preference orderings, $R_{\alpha}$ and $R_{\beta}$, such that for some $h_{1},h_{2}\in H$ we have $h_{1}I_{\alpha}h_{2}$ but $h_{1}P_{\beta}h_{2}$. The proof seeks to isolate a pair like in Example \ref{ex:indiff-pe}. Taking only the existence of $R_{\alpha},R_{\beta}\in\tilde{\rcal}$ for granted, we find a preference profile $R\in\tilde{\rcal}^{n}$ and tie-breaking profile $\succ$ such that $\text{TTC}_{\succ}(R)$ is not PE. Without loss of generality, let $w_{i}=h_{i}$ for all $i\in N$. Let the preference profile $R$ be given by
\begin{align*}
R_{i}= & \begin{cases} R_{\alpha}\quad(\text{having } h_{1}I_{\alpha}h_{2}) & \text{if }i\in A := \{i\in N:w_{i}R_{\alpha}w_{1}\}\setminus\{2\}\\ R_{\beta}\quad(\text{having } h_{1}P_{\beta}h_{2}) & \text{if } i \in B := A^c.
\end{cases}
\end{align*}
Note that $R_{1}=R_{\alpha}$ and $R_{2}=R_{\beta}$. 

Take any tie-breaking profile $\succ$ such that for all $i\in N$, $i\succ_{i}j$ for all $j\neq i$. Denote $x=\text{TTC}_{\succ}(R)$. It follows from Lemma \ref{lem:xw} that $x=w$. However, note that $w_{2}I_{1}w_{1}$ and $w_{1}P_{2}w_{2}$, so $x$ is Pareto dominated by $y=(w_{2},w_{1},w_{3},...,w_{n})$.
\end{proof}
\medskip{}

\begin{proof}[Proof of (1) $\iff$ (3)]
First we show that for any objective indifferences domain, \ttc mechanisms are CS. Fix some tie-breaking rule $\succ$. Let $\mathcal{H}$ be any partition of $H$ and let $R\in\mathcal{R}(\mathcal{H})^{n}$. If $\mathcal{H}=\{H\}$, the result is trivial, so suppose the partition has at least two blocks. Suppose that the core of $(N,H,w,R)$ is non-empty and contains some allocation $y$. Let $x=\text{TTC}_{\succ}(R)$. It suffices to show that $x_{i}I_{i}y_{i}$ for all $i\in N$. We proceed by induction on the steps of $\text{TTC}_{\succ}(R)$.
\begin{adjustwidth}{1cm}{}
    \begin{enumerate}
        \item[Step 1.]  By definition of $\text{TTC}_{\succ}$, for all $i\in S_{1}$ we have $x_{i}R_{i}h$ for all $h\in H$; that is, agents in the first cycle receive (one of) their favorite houses. Therefore $x_{i}R_{i}y_{i}$ for all $i\in S_{1}$. Suppose there is some $j\in S_{1}$ such that $x_{j}P_{j}y_{j}$. Then $S_{1}$ and $x$ block $y$, contradicting the assumption that $y$ is in the core. Thus, $x_{i}I_{i}y_{i}$ for all $i\in S_{1}$.

        \item[Step $k$.]  Assume that $x_{i}I_{i}y_{i}$ for all $i\in\bar{S}_{k-1}$. Suppose that $y_{j}P_{j}x_{j}$ for some $j\in S_{k}$. By construction of objective indifferences, $\eta(y_{j})\neq\eta(x_{j})$. By Fact \ref{fact:assignedbefore}, $\{i\in N:x_{i}\in\eta(y_{j})\}\subseteq\bar{S}_{k-1}$. Because $\eta(y_{j})$ is finite, if $\eta(y_{j})\neq\eta(x_{j})$, there must be an agent $\ell$ in $\bar{S}_{k-1}$ for whom $x_{\ell}\in\eta(y_{j})$ but $y_{\ell}\notin\eta(y_{j})$. Therefore $\neg(y_{\ell}I_{\ell}x_{\ell})$, contradicting the assumption that $x_{i}I_{i}y_{i}$ for all $i\in\bar{S}_{k-1}$. Thus $x_{i}R_{i}y_{i}$ for all $i\in S_{k}$. Now suppose there is some $j\in S_{k}$ such that $x_{j}P_{j}y_{j}$. Then $S_{k}$ and $x$ block $y$, contradicting that $y$ is in the core.
    \end{enumerate}
\end{adjustwidth}
\noindent Thus $x_{i}I_{i}y_{i}$ for all $i\in N$, so $x$ must also be in the core as desired. (Since $y$ was an arbitrary allocation in the core, this also proves Corollary \ref{cor:uniquecore}.)

\vspace{0.5em}

We now turn to the maximality claim. We show that for any domain $\tilde{\rcal}$ where $\tilde{\rcal}\nsubseteq\rcal(\hcal)$ for some partition $\hcal$, there is some \ttc which is not CS on $\tilde{\rcal}$. As previously, if $\tilde{\rcal}\nsubseteq\mathcal{R}(\mathcal{H})$ for any $\hcal$, then $\tilde{\rcal}$ must contain two orderings, $R_{\alpha}$ and $R_{\beta}$, such that for some $h_{1},h_{2}\in H$ we have $h_{1}I_{\alpha}h_{2}$ but $h_{1}P_{\beta}h_{2}$. Without loss of generality, let $h_{1}$ be $R_{\beta}$'s highest ranked house such that $h_{1}I_{\alpha}h_{2}$ and $h_{1}P_{\beta}h_{2}$ are true. Formally, 
\begin{center}
    $h_{1}R_{\beta}h$ for all $h\in H$ such that $hI_{\alpha}h_{2}$. \hspace{3em} $(*)$
\end{center}
\noindent Also without loss of generality, let $w_{i}=h_{i}$ for all $i\in N$.

The following construction is the same as in the previous part. Define $A=\{i\in N:w_{i}R_{\alpha}w_{1}\}\setminus\{2\}$ and consider the preference profile where $R_{i}=R_{\alpha}$ for all $i\in A$ and $R_{i}=R_{\beta}$ for all $i\in B := A^{c}$. Take any tie-breaking profile $\succ$ such that for all $i\in N$, $i\succ_{i}j$ for all $j\neq i$.  Denote $x=\text{TTC}_{\succ}(R)$. It follows from Lemma \ref{lem:xw} that $x=w$. As before, $x$ is Pareto dominated by $y=(w_{2},w_{1},w_{3},...,w_{n})$, so $x$ is not in the core ($x$ is blocked by the grand coalition $N$ and $y$).

It remains to show that $y$ is in the core. Toward a contradiction, suppose there is a coalition $Q$ and allocation $z$ that blocks $y$. Let $U=\{i\in Q:z_{i}P_{i}y_{i}\}$ be the strict improvers under $z$, which must be non-empty. We proceed in steps denoted by the following claims.
\begin{claim}
\label{claim:wsubsetac} $U\cap A=\emptyset$.
\end{claim}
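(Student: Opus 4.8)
The plan is to exploit the fact that under $y$ every agent in $A$ receives a house that is $R_\alpha$-indifferent to its own endowment, so that any strict improvement for an agent in $A$ must come from a genuinely better house. First I would record that $y_1 = w_2 = h_2 I_\alpha h_1 = w_1$ and $y_i = w_i$ for $i \in A \setminus \{1\}$; since $R_i = R_\alpha$ for every $i \in A$, this gives $y_i I_\alpha w_i$ for all $i \in A$. Hence if $i \in U \cap A$, then $z_i P_\alpha y_i I_\alpha w_i$ yields $z_i P_\alpha w_i$, and because $w_i R_\alpha w_1 = h_1$ for $i \in A$, transitivity gives $z_i P_\alpha h_1$. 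I would then encode the blocking trade as the permutation $\phi$ of $Q$ defined by $z_i = w_{\phi(i)}$, which is available since $z_Q = w_Q$.

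The heart of the argument is a closure property. Let $T = \{h \in H : h P_\alpha h_1\}$ be the houses strictly $R_\alpha$-preferred to $h_1$, and let $C = \{i \in Q : w_i \in T\}$. I would check that $C \subseteq A$ (an owner of a house in $T$ cannot be agent $2$, since $h_2 I_\alpha h_1$ keeps $h_2 \notin T$) and that $y_j = w_j$ for every $j \in C$ (as $1 \notin C$). Two containments then drive the proof: (i) every improver in $A$ points into $C$, that is $\phi(U \cap A) \subseteq C$, because $z_i P_\alpha h_1$ means $z_i \in T$; and (ii) $C$ is closed under $\phi$, because for $j \in C$ we have $z_j R_j y_j = w_j$ with $w_j P_\alpha h_1$, so $z_j P_\alpha h_1$ and thus $\phi(j) \in C$. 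Since $\phi$ is injective on the finite set $C$, (ii) makes $\phi|_C$ a bijection of $C$; combined with (i) and injectivity of $\phi$ on $Q$, this forces $U \cap A \subseteq C$.

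Finally I would derive a contradiction inside $C$ via a cyclic-monotonicity argument. For every $j \in C$ we have $w_{\phi(j)} = z_j R_\alpha w_j$. Decomposing the permutation $\phi|_C$ into cycles and chaining $R_\alpha$ around any cycle $(c_1 \cdots c_m)$ returns to $w_{c_1} R_\alpha w_{c_1}$, so transitivity and completeness force $w_{\phi(j)} I_\alpha w_j = y_j$ for all $j \in C$; hence no agent of $C$ strictly improves, i.e. $C \cap U = \emptyset$. Together with $U \cap A \subseteq C$ this gives $U \cap A = \emptyset$. I expect the main obstacle to be exactly the step this cyclic argument resolves: because $R_\alpha$ itself carries indifferences (notably $h_1 I_\alpha h_2$), one cannot conclude that an agent in $A$ who merely receives a \emph{different} house strictly gains, so the naive ``pick the agent of $U \cap A$ with the $R_\alpha$-best endowment'' approach stalls; the closure of $C$ under $\phi$ together with the around-the-cycle equality is what rules out any strict gain. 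Note this claim does not appear to need the tie-break-maximality condition $(*)$, which I would expect to be reserved for the companion claim governing agents in $B$.
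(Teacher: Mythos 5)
Your proof is correct, and it takes a genuinely different route from the paper's. The paper uses an extremal-plus-pigeonhole device: it picks an improver $i\in U\cap A$ whose assignment $z_i$ is $R_\alpha$-maximal among improvers, considers the single indifference class $J=\{j\in Q: w_j I_\alpha z_i\}\subseteq A$, and uses finiteness of $w_J$ to extract some $j\in J$ with $z_j\notin w_J$, hence $z_j P_\alpha w_j=y_j$, contradicting the maximality of $z_i$. You instead globalize: you encode the blocking trade as a permutation $\phi$ of $Q$, show that the owner set $C$ of the upper contour set $T=\{h: h P_\alpha h_1\}$ is closed under $\phi$ (so $\phi|_C$ is a bijection of the finite set $C$), deduce $U\cap A\subseteq C$ from injectivity of $\phi$, and then eliminate all strict improvement inside $C$ by chaining $w_{\phi(j)} R_\alpha w_j$ around each cycle of $\phi|_C$. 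All your supporting facts check out: $y_i I_\alpha w_i$ for every $i\in A$ (including $i=1$ via $w_2 I_\alpha w_1$), $2\notin C$ because $w_2=h_2 I_\alpha h_1$ keeps $w_2\notin T$, and $1\notin C$ so $y_j=w_j$ on $C$. Your route in fact proves slightly more than the claim --- $z_j I_\alpha w_j$ for \emph{every} coalition member whose endowment lies strictly above $h_1$, i.e., the upper part of the coalition trades entirely within indifference classes --- whereas the paper's version only exhibits a contradiction; on the other hand, the paper's extremal template is shorter and is deliberately reused almost verbatim in Lemma \ref{lem:xw} and Claim \ref{claim:aintq}, which buys uniformity across the appendix. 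You are also right that $(*)$ plays no role here; the paper invokes it only in the final claim reconciling Claims \ref{claim:wsubsetac} and \ref{claim:aintq}. One small correction to your closing remark: the paper's argument does not stall on indifferences, because it maximizes over the assignments $z_i$ rather than over endowments; your observation applies only to the endowment-maximizing variant you describe.
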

\begin{adjustwidth}{1cm}{}

Intuitively, if any agent in $A$ strictly prefers a house to his assignment, this house is also in $w_{A}$, preventing mutual improvements within $A$. Toward a contradiction, suppose $U_{A}:=U\cap A$ is non-empty; note that all agents in $U_{A}$ have the preference ranking $R_{\alpha}$. Take $i\in U_{A}$ such that $z_{i}R_{\alpha}z_{i'}$ for all $i'\in U_{A}$. That is, $i$ is the best off improver in $U_{A}$ (or one of the best off improvers). Observe that $z_{i}P_{\alpha}w_{i}$ since $i\in U$ and $w_{i}R_{\alpha}w_{1}$ since $i\in A$. Thus by construction of $R_{\alpha}$ and $A$, we have $J:=\left\{ j\in Q:w_{j}I_{\alpha}z_{i}\right\} \subseteq A$. That is, objects equivalent to $z_{i}$ are also owned by members of $A$. Also, $J$ is nonempty since $z_{i} \in w_{J}$. Further, note $1\not\in J$ since $z_{i}P_{\alpha}w_{1}$. Agent 1 is the only agent in $A$ who does not receive $y_i=w_i$, so $y_{j}=w_{j}$ for all $j\in J$.

Since $w_{J}$ is finite and $i\not\in J$, in order for $i$ to receive $z_{i}\in w_{J}$, some $j\in J$ must have $z_{j}\not\in w_{J}$. But then since $j\in Q\cap A$, it must be that $z_{j}P_{\alpha}w_{j}(=y_{j})$. Then $j \in U_A$, contradicting the assumption that $z_{i}R_{\alpha}z_{i'}$ for all $i'\in U_{A}$.

\end{adjustwidth}
\begin{claim}
\label{claim:aintq} $Q\cap A\neq\emptyset$.
\end{claim}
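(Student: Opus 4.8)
The plan is to argue by contradiction: suppose $Q\cap A=\emptyset$, so that $Q\subseteq B$ and \emph{every} agent in $Q$ has the single preference relation $R_{\beta}$. Since $(Q,z)$ blocks $y$ we have $z_{Q}=w_{Q}$ and $z_{i}R_{\beta}y_{i}$ for all $i\in Q$, with $U\neq\emptyset$. The first step is to upgrade these ``weakly better than $y_{i}$'' comparisons into ``weakly better than the \emph{endowment} $w_{i}$'' comparisons. For every $i\in Q$ with $i\neq 2$ this is immediate, since such an agent lies in $B\setminus\{2\}$ and therefore has $y_{i}=w_{i}$. For $i=2$ (if $2\in Q$) we have $y_{2}=w_{1}=h_{1}$, and since $h_{1}P_{\beta}h_{2}=w_{2}$ we get $z_{2}R_{\beta}h_{1}P_{\beta}w_{2}$, so in fact $z_{2}P_{\beta}w_{2}$ \emph{strictly}. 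Hence $(Q,z)$ is a reallocation of the coalition's own endowments ($z_{Q}=w_{Q}$) under which every member of $Q$ is weakly better off than his endowment, and at least one member is strictly better off --- agent $2$ if $2\in Q$, and otherwise any $i^{*}\in U$, for which $z_{i^{*}}P_{\beta}y_{i^{*}}=w_{i^{*}}$.

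The heart of the argument is then a ``a unanimous coalition cannot profitably trade'' fact: if all agents in $Q$ share the preference relation $R_{\beta}$ and $z_{Q}=w_{Q}$ with $z_{i}R_{\beta}w_{i}$ for all $i\in Q$, then necessarily $z_{i}I_{\beta}w_{i}$ for all $i\in Q$. I would prove this by peeling off indifference classes. List the $R_{\beta}$-indifference classes occurring among the houses $w_{Q}$ as a strictly $R_{\beta}$-decreasing sequence $T_{1},\ldots,T_{r}$, and let $M_{t}=\{i\in Q:w_{i}\in T_{t}\}$, so that $|M_{t}|=|T_{t}|$. An agent in $M_{1}$ owns a house in the top class and can receive nothing strictly $R_{\beta}$-better from $w_{Q}$, so $z_{i}R_{\beta}w_{i}$ forces $z_{i}\in T_{1}$; as $z|_{Q}$ is a bijection of $Q$ onto $w_{Q}$ and $|M_{1}|=|T_{1}|$, the agents of $M_{1}$ receive exactly the houses $T_{1}$. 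Removing $M_{1}$ and $T_{1}$ and repeating shows inductively that each $M_{t}$ receives exactly $T_{t}$, so every agent receives a house in his own indifference class and $z_{i}I_{\beta}w_{i}$ for all $i\in Q$.

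This directly contradicts the strict improver identified in the first step: if $2\in Q$ then $z_{2}I_{\beta}w_{2}$ contradicts $z_{2}P_{\beta}w_{2}$, and if $2\notin Q$ then $z_{i^{*}}I_{\beta}w_{i^{*}}$ contradicts $z_{i^{*}}P_{\beta}w_{i^{*}}$. Either way we reach a contradiction, so $Q\cap A\neq\emptyset$. The delicate point, and the place I would be most careful, is the treatment of agent $2$: his $y$-assignment $h_{1}$ is \emph{not} one of the endowments in $w_{Q}$ (it is owned by agent $1\in A$), so one must check both that he nonetheless strictly improves over his \emph{own} endowment $w_{2}=h_{2}$ and that the cardinality bookkeeping in the peeling step is unaffected by this. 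Everything else is routine counting, parallel in spirit to the best-off-improver argument used for Claim~\ref{claim:wsubsetac}.
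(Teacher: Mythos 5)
Your proof is correct, and it organizes the argument differently from the paper. The paper argues directly against $y$: it takes a best-off improver $i\in U$ under $R_{\beta}$, counts over the set $J=\{j\in Q: w_{j}I_{\beta}z_{i}\}$ to manufacture a strictly better improver, and then handles the residual case $j=2$ with a second, separate counting argument over $L=\{\ell\in Q: w_{\ell}R_{\beta}w_{1}\}$ (using $w_{1}P_{\beta}w_{2}$ to get $2\notin L$ and extracting an agent $\ell$ with $y_{\ell}P_{\beta}z_{\ell}$). You instead first reduce all the blocking comparisons to comparisons against \emph{endowments} --- exactly as you note, the only nontrivial case is agent $2$, where $z_{2}R_{\beta}y_{2}=w_{1}=h_{1}P_{\beta}h_{2}=w_{2}$ upgrades the weak comparison to a strict one --- and then invoke a single clean lemma: a coalition whose members all share one preference relation cannot reallocate its own endowments so that everyone is weakly better off and someone strictly better off. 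Your top-down peeling of indifference classes ($M_{t}$ onto $T_{t}$ by cardinality and the bijectivity of $z$ on $Q$) is a valid proof of that lemma, and your worry about agent $2$'s $y$-assignment $h_{1}\notin w_{Q}$ is correctly dispatched, since the lemma never references $y$. Both proofs are elementary pigeonhole arguments on indifference classes within $w_{Q}$; yours buys modularity (one reusable ``unanimous coalitions cannot profitably trade'' fact absorbs both of the paper's cases, eliminating the ad hoc $L$-argument for agent $2$), while the paper's stays entirely inside the blocking-against-$y$ frame and avoids stating an auxiliary lemma. Neither proof needs the normalization $(*)$, which is consistent with the paper, where $(*)$ is used only in the claim that Claims \ref{claim:wsubsetac} and \ref{claim:aintq} cannot both hold.
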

\begin{adjustwidth}{1cm}{}

We show that the coalition $Q$ cannot generate improvements with $w_{Q}$ if all its members have the same preference $R_{\beta}$. Toward a contradiction, suppose $Q\subseteq A^{c}$; then $R_{i}=R_{\beta}$ for all $i\in Q$. Let $i\in U$ be such that $z_{i}R_{\beta}z_{i'}$ for all $i'\in U$; that is, $i$ is the best off improver.

Now consider $J:=\{j\in Q:w_{j}I_{\beta}z_{i}\}\subseteq A^{c}$. Since $w_{J}$ is finite and $i\not\in J$, in order for $i$ to receive $z_{i}\in w_{J},$ some $j\in J$ must have $z_{j}\not\in w_{J}$. Then since $j\in Q$, it must be $z_{j}P_{\beta}w_{j}$. If $j \neq 2$, then $y_{j}=w_{j}$ so $z_{j}P_{\beta}y_{j}$. But then $j \in U$, contradicting $z_{i}R_{\beta}z_{i'}$ for all $i'\in U$.

Now suppose $j=2$. Since $w_{1}\in w_{A}$, we have $z_{2}\not=w_{1}(=y_{2})$. Then $z_{2}R_{\beta}w_{1}$ and $z_{2}\not=w_{1}$. That is, agent 2 must receive a house besides $w_1$ endowed to $Q\subseteq A^{c}$. Consider $L:=\{\ell\in Q:w_{\ell}R_{\beta}w_{1}\}\subseteq A^{c}$, and note that $2\not\in L$ since $w_{1}P_{\beta}w_{2}$. Since $w_{L}$ is finite, in order for $z_{2}\in w_{L}$, some $\ell\in L$ must receive $z_{\ell}\not\in w_{L}$. Then it must be that $w_{\ell}P_{\beta}z_{\ell}$. Since $\ell\in A^{c}$ and $\ell\not=2$, we have $y_{\ell}=w_{\ell}$. This gives $y_{\ell}P_{\beta}z_{\ell}$, contradicting that every agent in $Q$ weakly improves under $z$.

\end{adjustwidth}
\begin{claim}
Claims \ref{claim:wsubsetac} and \ref{claim:aintq} cannot both be true.
\end{claim}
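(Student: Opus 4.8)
The plan is to combine the two claims by showing that a blocking pair $(Q,z)$ satisfying both $U\cap A=\emptyset$ (Claim \ref{claim:wsubsetac}) and $Q\cap A\neq\emptyset$ (Claim \ref{claim:aintq}) can be reduced to a blocking pair contained entirely in $A^{c}$, which the proof of Claim \ref{claim:aintq} rules out. Indeed, the argument establishing Claim \ref{claim:aintq} uses only that $(Q,z)$ blocks $y$ and that $Q\subseteq A^{c}$, so it in fact shows that \emph{no} blocking coalition is contained in $A^{c}$. I would first record the structural consequence of Claim \ref{claim:wsubsetac}: every $i\in Q\cap A$ has $R_{i}=R_{\alpha}$ and, not being a strict improver, satisfies $z_{i}I_{\alpha}y_{i}$. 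Using $y_{i}=w_{i}$ for $i\in A\setminus\{1\}$ and $y_{1}=w_{2}I_{\alpha}w_{1}$, this yields $z_{i}I_{\alpha}w_{i}$ for every $i\in Q\cap A$; that is, each agent in $Q\cap A$ receives a house in its own $R_{\alpha}$-indifference class.

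Next I would show that, apart from the top class, these indifference classes are owned entirely within $A$. For any $i\in Q\cap A$ with $w_{i}P_{\alpha}w_{1}$, every house $h$ with $hI_{\alpha}w_{i}$ satisfies $hP_{\alpha}w_{1}$, so its owner lies in $\{j:w_{j}P_{\alpha}w_{1}\}\subseteq A$ and in particular is not agent $2$. Hence the houses of each such class appearing in $w_{Q}$ are permuted by $z$ among agents of $Q\cap A$: if any such agent received a house outside its class it would have to receive a \emph{strictly} better one and thus lie in $U\cap A$, contradicting Claim \ref{claim:wsubsetac}. The only class that can entangle $A$ with $A^{c}$ is the top class $C^{*}:=\{h\in H:hI_{\alpha}w_{1}\}$, since $w_{2}=h_{2}\in C^{*}$ is owned by agent $2\in A^{c}$.

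The crux, and the main obstacle, is to close off this top class. Here I would invoke condition $(*)$, which says $w_{1}R_{\beta}h$ for every $h$ with $hI_{\alpha}w_{2}$, i.e. $w_{1}$ is $R_{\beta}$-weakly-best among all of $C^{*}$. If agent $2\in Q$, its weak-improvement constraint $z_{2}R_{\beta}y_{2}=w_{1}$ combined with $(*)$ forces either $z_{2}\notin C^{*}$ or $z_{2}I_{\beta}w_{1}$; in either case I would argue, exactly as in the $L$-set step of Claim \ref{claim:aintq}, that agent $2$ cannot absorb a house endowed by $Q\cap A$ without forcing some agent of $Q\cap A$ to improve strictly (contradicting Claim \ref{claim:wsubsetac}) or some agent of $Q\cap A^{c}$ to end up strictly worse than under $y$ (contradicting weak improvement). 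This pins down that the top-class houses in $w_{Q}$ are also permuted within $Q\cap A$, so that $z$ maps $w_{Q\cap A}$ onto itself.

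With the trade thus decoupled, the restriction $z|_{Q\setminus A}$ is a permutation of $w_{Q\setminus A}$ under which every member of $Q\setminus A$ weakly improves; since $U\subseteq A^{c}$ by Claim \ref{claim:wsubsetac} and $U\neq\emptyset$, the pair $(Q\setminus A,\,z|_{Q\setminus A})$ blocks $y$ and is contained in $A^{c}$. This contradicts Claim \ref{claim:aintq}, so no such $(Q,z)$ exists and $y$ is in the core. I expect the agent-$2$ and top-class bookkeeping of the third paragraph to be the delicate part, precisely because the removal of agent $2$ from $A$ and the swap defining $y$ both act inside $C^{*}$; everything else is a routine maximal-element cycle argument of the kind already used in the two claims. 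One simplification I would consider is first reducing to the case where $Q$ is a single $z$-cycle through a strict improver, which makes the internal/external accounting of $C^{*}$ more transparent.
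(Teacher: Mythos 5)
Your first two paragraphs are sound and parallel the paper's own bookkeeping: every $i\in Q\cap A$ satisfies $z_{i}I_{\alpha}w_{i}$, and a counting argument closes each $I_{\alpha}$-class strictly above $w_{1}$ inside $Q\cap A$, so the only possible interface is the class $C^{*}$ of $w_{1}$. But your crux step (third paragraph) has a genuine gap, and it fails in exactly the configuration that matters. You only rule out agent $2$ \emph{directly} absorbing a house endowed by $Q\cap A$, and from this you conclude that $C^{*}$ is closed, i.e.\ that $z$ maps $w_{Q\cap A}$ onto itself. That inference does not follow: the dangerous crossing has some $\bar{b}\in Q\cap A^{c}$ with $\bar{b}\neq2$ absorbing a top-class house $w_{a_{1}}\in w_{Q\cap A}$, while $\bar{a}\in Q\cap A$ absorbs $w_{2}$ and agent $2$ receives some $w_{b_{1}}\in w_{A^{c}}$ with $w_{b_{1}}R_{\beta}w_{1}$. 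In this configuration no agent of $Q\cap A$ strictly improves and no agent of $Q\cap A^{c}$ is strictly worse off (indeed $\bar{b}$ may strictly \emph{improve}), so neither horn of your dichotomy fires. Nor does the ``$L$-set'' chase from Claim \ref{claim:aintq} transplant to the mixed population: the lower-contour argument can exit through an $R_{\alpha}$-agent, and an agent of $A$ may perfectly well end up holding a house that is $R_{\beta}$-worse than $w_{1}$ without any contradiction, since its own comparisons are made with $R_{\alpha}$ (condition $(*)$ even permits $w_{1}P_{\beta}h$ for $h\in C^{*}$). Worse, in this crossing configuration the trade genuinely entangles $A$ and $A^{c}$, so your fourth paragraph's extraction also breaks down: $z|_{Q\setminus A}$ is \emph{not} a permutation of $w_{Q\setminus A}$, and the endgame cannot be a reduction to Claim \ref{claim:aintq} -- it must be a direct contradiction with $(*)$.

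The paper's proof supplies precisely the missing piece, and the ``simplification'' you relegate to an afterthought is where a correct proof must begin: reduce without loss of generality to a single $z$-cycle containing a strict improver. If that cycle lies in $A$ it contradicts Claim \ref{claim:wsubsetac}; if in $A^{c}$, Claim \ref{claim:aintq}; if it crosses, then your own step 1 pins the unique $A\to A^{c}$ edge as $z_{\bar{a}}=w_{2}$ (so agent $2$ is on the cycle), and one chains $I_{\alpha}$-indifferences along the $A$-arc, $w_{a_{1}}I_{\alpha}w_{a_{2}}I_{\alpha}\cdots I_{\alpha}w_{2}$, and weak $R_{\beta}$-improvements along the \emph{entire} $A^{c}$-arc, $w_{a_{1}}R_{\beta}w_{\bar{b}}R_{\beta}\cdots R_{\beta}w_{b_{1}}R_{\beta}w_{1}$, with at least one strict link because the strict improver must sit on the $A^{c}$-arc. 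This yields $w_{a_{1}}I_{\alpha}w_{2}$ together with $w_{a_{1}}P_{\beta}w_{1}$, contradicting $(*)$. It is this around-the-arc chain -- not a decoupling of $w_{Q\cap A}$, which is unobtainable in the crossing case -- that constitutes the actual content of the claim, and your proposal as written is missing it.
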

\begin{adjustwidth}{1cm}{}

Without loss of generality, assume $Q$ forms a single trading cycle under $z$.\footnote{If the agents in $Q$ formed two or more trading cycles, then some cycle $S$ must contain an agent $i$ such that $z_{i}P_{i}y_{i}$, and this cycle $S$ suffices to form a blocking coalition.} Since $Q$ contains agents in both $A$ and $A^{c}$ in the same cycle, there exists some agent $\bar{b}\in A^{c}$ such that $z_{\bar{b}}\in w_{A}$ and some agent $\bar{a}\in A$ such that $z_{\bar{a}}\in w_{A^{c}}$. It must be that $z_{\bar{a}}=w_{2}$, since 
$z_{\bar a} I_\alpha w_{\bar a}$ and by construction $w_{\bar{a}}P_{\alpha}w_{i}$ for any $i\in A^{c}\backslash\{2\}$. For the same reason, $\bar{a}$ must be the only agent in $A\cap Q$ who receives a house from $w_{A^{c}}$.

Since only one agent in $A\cap Q$ receives a house from $w_{A^{c}}$, only one agent from $A^{c}\cap Q$ receives a house from $w_{A}$. We can represent the trading cycle $Q$ as \[ a_{1}\rightarrow a_{2}\rightarrow...\rightarrow a_{m}\rightarrow\bar{a}\rightarrow2\rightarrow b_{1}\rightarrow...\rightarrow\bar{b}\rightarrow a_{1} \] where $a_{1},...,a_{m},\bar{a}\in A$ and $2,b_{1},...,\bar{b}\in A^{c}$, and $z_{a_{1}}=w_{a_{2}}$, etc.

Since $U\cap A=\emptyset$, no $a\in\{a_{1},...,a_{m},\bar{a}\}$ strictly improves under $z$; that is, $z_{a}I_{\alpha}y_{a}$ for each $a$. Recall that $y_{i}I_{\alpha}w_{i}$ for all $i\in A$, since either 1) $y_{i}=w_{i}$ or 2) $i=1$, $y_{1}=w_{2}$ and $w_{2}I_{1}w_{1}$. Thus $w_{a_{1}}I_{\alpha}w_{a_{2}}I_{\alpha}\cdots I_{\alpha}w_{2}$.

So $z_{b}R_{\beta}y_{b}$ for all $b\in\{2,b_{1},...,\bar{b}\}$, with a strict relation for at least one $b$. Since $y_{2}=w_{1}$, we have $w_{b_{1}}R_{\beta}w_{1}$. For all other $b$, we have $y_{b}=w_{b}$, giving us $w_{a_{1}}R_{\beta}w_{\bar{b}}$, ..., $w_{b_{2}}R_{\beta}w_{b_{1}}$. At least one of these relations is strict; combining these gives $w_{a_{1}}P_{\beta}w_{1}$.

Then $w_{a_{1}}I_{\alpha}w_{2}$ and $w_{a_{1}}P_{\beta}w_{1}$. But this contradicts the assumption $(*)$.

\end{adjustwidth}

Thus we have that $y$ is in the core, as desired.
\end{proof}
\vspace{1em}

\ttc always selects an allocation in the weak core. This is a known property, but we give a short proof for convenience.

\noindent\textbf{Proposition \ref{prop:weakcore}}. \emph{For any
market, the weak core is non-empty and \ttc mechanisms select an allocation in the weak core.}
\begin{proof}
Fix a market $(N,H,w,R)$ and a tie-breaking profile $\succ$. Denote $x=\text{TTC}_{\succ}(R)\equiv \text{TTC}(R_{\succ})$. Toward a contradiction, suppose there exists a coalition $Q\subseteq N$ and allocation $y$ that weakly blocks $x$. That is, $y_{Q}=w_{Q}$ and $y_{i}P_{i}x_{i}$ for all $i\in Q$. Then we also have $y_{i}P_{i,\succ}x_{i}$ for all $i\in Q$, since the tie-breaking procedure preserves strict comparisons. Thus $y$ and $Q$ form a weak blocking coalition against $x$ \emph{according to the strict preferences} $R_{\succ}$. This violates the core property of TTC with strict preferences; thus \ttc always selects a mechanism in the weak core.

Since \ttc always produces an allocation, this also proves the existence of a weak core allocation.
\end{proof}

\subsection{Group strategy-proofness}

\label{gspappendix}

\noindent\textbf{Theorem \ref{thm:gsp}.} \emph{$\rcal$ is a symmetric-maximal
domain on which \ttc mechanisms are group strategy-proof if and only if it is an objective indifferences domain.}

\vspace{0.5em}

Before proving Theorem \ref{thm:gsp}, we review an important property of \ttc and state a useful lemma. Let $L(h,R_{i})=\{h'\in H:hR_{i}h'\}$ be the lower contour set of a preference ranking $R_{i}$ at house $h$.

\vspace{0.5em}

\noindent\textbf{Monotonicity (MON)}. A mechanism $f$ is \textbf{monotone}
if $f(R)=f(R')$ for any preference profiles $R$ and $R'$ such that $L(f_{i}(R),R_{i})\subseteq L(f_{i}(R),R_{i}')$ for all $i\in N$.
\vspace{0.5em}

\noindent That is, a mechanism $f$ is monotone if, whenever any set of agents move their allocations upwards in their preference rankings, the allocation remains the same. It is straightforward to show that TTC is monotone for strict preferences (e.g., \citet{Takamiya01}). Since $\text{TTC}_{\succ}(R)\equiv \text{TTC}(R_{\succ})$ for any $R$ and $\succ$, it follows directly that \ttc mechanisms are monotone.

The following result is an immediate consequence of Lemma 1 from \citet{ST24} and the fact that $\text{TTC}_{\succ}(R) \equiv \text{TTC}(R_\succ)$.
\begin{lem}[\citet{ST24}]
\label{lem:st_lem1}For any $R,R'$, let $x=\text{TTC}_{\succ}(R)$ and $y=\text{TTC}_{\succ}(R')$. Suppose there is some $i$ such that $y_{i}P_{i,\succ}x_{i}$. Then there exists some agent $j$ and house $h$ such that $x_{j}P_{j,\succ}h$ and $hP'_{j,\succ}x_{j}$.
\end{lem}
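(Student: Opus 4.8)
The plan is to prove the contrapositive and to reduce everything to the behavior of $\text{TTC}$ on the strict, tie-broken profiles $R_{\succ}$ and $R'_{\succ}$. Recall that by construction each $R_{i,\succ}$ and $R'_{i,\succ}$ is a strict linear order over $H$, and that $x=\text{TTC}_{\succ}(R)=\text{TTC}(R_{\succ})$ while $y=\text{TTC}_{\succ}(R')=\text{TTC}(R'_{\succ})$. Thus the statement is exactly the corresponding claim for $\text{TTC}$ under strict preferences applied to the profiles $R_{\succ},R'_{\succ}$; this is the content of Lemma 1 of \citet{ST24}, which I would invoke directly. For a self-contained argument I would instead derive it from monotonicity of $\text{TTC}$, which is already recorded immediately above in the text.

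Concretely, suppose toward the contrapositive that no such $j$ and $h$ exist, i.e., for every agent $j$ and every house $h$ it fails that $x_{j}P_{j,\succ}h$ and $hP'_{j,\succ}x_{j}$. I claim this is precisely the monotonicity hypothesis at the allocation $x$. Indeed, since $R_{j,\succ}$ is strict we have $L(x_{j},R_{j,\succ})=\{h:x_{j}P_{j,\succ}h\}\cup\{x_{j}\}$, and the assumed condition says exactly that every house strictly below $x_{j}$ under $R_{j,\succ}$ remains weakly below $x_{j}$ under $R'_{j,\succ}$; equivalently $L(x_{j},R_{j,\succ})\subseteq L(x_{j},R'_{j,\succ})$ for all $j$. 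The one point to verify is the direction of this inclusion together with the fact that $x_{j}$ lies in both contour sets, which holds by reflexivity.

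Applying monotonicity of $\text{TTC}$ to the strict profiles $R_{\succ}$ and $R'_{\succ}$ then yields $\text{TTC}(R_{\succ})=\text{TTC}(R'_{\succ})$, i.e., $x=y$. Consequently $y_{i}=x_{i}$ for every $i$, and since $P_{i,\succ}$ is a strict (hence irreflexive) relation, $y_{i}P_{i,\succ}x_{i}$ is impossible for any $i$. This contradicts the hypothesis that some $i$ satisfies $y_{i}P_{i,\succ}x_{i}$, completing the contrapositive and hence the lemma.

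The step I expect to require the most care is not any of the above, which is essentially a definitional translation, but rather the monotonicity fact itself, were one to prove it from scratch rather than cite it. Establishing that $\text{TTC}$ on strict preferences is monotone requires tracking the cycle-by-cycle execution and arguing by induction on the order in which agents are removed: if every agent's assigned object only moves up in the profile, then at each stage the same cycles form and the same assignments are made, so the outcome is unchanged. Since the excerpt already records monotonicity of $\text{TTC}_{\succ}$ via monotonicity of $\text{TTC}$ and the identity $\text{TTC}_{\succ}(R)\equiv\text{TTC}(R_{\succ})$, I would take it as given and keep the argument at the level above.
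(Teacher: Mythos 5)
Your proposal is correct. Your first route---invoking Lemma 1 of \citet{ST24} together with the identity $\text{TTC}_{\succ}(R)\equiv\text{TTC}(R_{\succ})$---is exactly the paper's proof, which consists of that single sentence. Your second, self-contained route is genuinely different: you observe that the negation of the conclusion (no $j$ and $h$ with $x_{j}P_{j,\succ}h$ and $hP'_{j,\succ}x_{j}$) is, by completeness of the strict tie-broken orders, precisely the inclusion $L(x_{j},R_{j,\succ})\subseteq L(x_{j},R'_{j,\succ})$ for all $j$, so monotonicity of TTC applied to the strict profiles $R_{\succ}$ and $R'_{\succ}$ forces $x=y$, after which $y_{i}P_{i,\succ}x_{i}$ is impossible by irreflexivity. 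This is sound, and there is no circularity: the paper records monotonicity of TTC independently (citing \citet{Takamiya01}) rather than deriving it from this lemma, whereas \citet{ST24} prove their Lemma 1 directly by tracking the TTC execution. The comparison: the paper's citation is shorter but leaves the content opaque, while your derivation makes transparent that the lemma is nothing more than the contrapositive of Maskin monotonicity restricted to tie-broken profiles, and it actually yields the stronger conclusion $x=y$ under the contrapositive hypothesis. One stylistic nit: you set the argument up as a contrapositive but close it as a contradiction; either framing alone suffices.
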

In other words, if $i$ is to receive a better house, some other agent $j$ must misreport his preferences and rank a worse house above his original allocation.

We now provide our proof of Theorem \ref{thm:gsp}.
\begin{proof}[Proof of Theorem \ref{thm:gsp}]
Fix $(N,H,w)$. The result is trivial for $|N|=1$, so assume $|N|\geq2$. First we show that for any objective indifferences domain, \ttc mechanisms are GSP. Fix some tie-breaking rule $\succ$. Let $\mathcal{H}$ be any partition of $H$ and let $R\in\mathcal{R}(\mathcal{H})^{n}$. If $\mathcal{H}=\{H\}$, the result is trivial, so assume that $\hcal$ has at least two blocks. Suppose $Q\subseteq N$ reports $R'_{Q}$ where $R'=(R'_{Q},R_{-Q})\in\rcal(\hcal)^{n}$. Let $y=\text{TTC}_{\succ}(R')$. We will show that if $y_{i}P_{i}x_{i}$ for some $i\in Q$, then $x_{j}P_{j}y_{j}$ for some $j\in Q$.

Let $R''=(R''_{Q},R_{-Q})$ be the preference profile in $\rcal(\hcal)^{n}$ such that for each $i\in Q$, $R''_{i}$ top-ranks the houses in $\eta(y_{i})$ and otherwise preserves the ordering of $R_{i}$. That is, for any $h\in\eta(y_{i})$ and $h'\notin\eta(y_{i})$, $hP''_{i}h'$; otherwise, $hR''_{i}h'$ if and only if $hR_{i}h'$. Let $z=\text{TTC}_{\succ}(R'')$. By monotonicity of $\text{TTC}_{\succ}$, $z=y$. Fix $i\in Q$ such that $y_{i}P_{i}x_{i}$. Since $z=y$, we have $z_{i}P_{i}x_{i}$ and consequently $z_{i}P_{i,\succ}x_{i}$. Applying Lemma \ref{lem:st_lem1}, there must be some $j\in Q$ and $h\in H$ such that $x_{j}P_{j,\succ}h$ but $hP''_{j,\succ}x_{j}$. Note that $h\notin\eta(x_{j})$; if it were, then for any $R,R''\in\rcal(\hcal)^{n}$, $x_{j}P_{j,\succ}h$ if and only if $x_{j}P''_{j,\succ}h$. Therefore, $x_{j}P_{j}h$ and $hP''_{j}x_{j}$.\footnote{This is where the restriction to objective indifferences is used. Under general indifferences, this is not necessarily true.} The only change from $R_{j}$ to $R''_{j}$ is to top-rank the houses in $\eta(y_{j})$, so it must be that $h\in\eta(y_{j})$. But then $x_{j}P_{j}y_{j}$, as desired.

\vspace{0.5em}

Next we show that for any symmetric domain $\tilde{\rcal}$ where $\tilde{\rcal}\nsubseteq\rcal(\hcal)$ for any partition $\hcal$ of $H$, \ttc mechanisms are not GSP on $\tilde{\rcal}$. As before, if $\tilde{\rcal}\not\subseteq\mathcal{R}(\mathcal{H})$ for any $\hcal$, then $\tilde{\rcal}$ must contain two orderings, $R_{\alpha}$ and $R_{\beta}$, such that for some $h_{1},h_{2}\in H$ we have $h_{1}I_{\alpha}h_{2}$ but $h_{1}P_{\beta}h_{2}$. The symmetric requirement also necessitates that $\tilde{\rcal}$ contains some $R_{\gamma}$ such that $h_{2}P_{\gamma}h_{1}$. Taking only the existence of $R_{\alpha},R_{\beta},R_{\gamma}\in\tilde{\rcal}$ for granted, we find a preference profile $R\in\tilde{\rcal}^{n}$ and tie-breaking profile $\succ$ such that $\text{TTC}_{\succ}(R)$ is not GSP.

Without loss of generality, assume $w_{i}=h_{i}$ for all $i\in N$. Consider the preference profile $R\in\tilde{\rcal}^{n}$ where
\begin{align*}
R_{i}= & \begin{cases} R_{\alpha}\quad(\text{having } w_{1}I_{\alpha}w_{2}) & \text{if } i \in A :=  \{i \in N : w_i R_\alpha w_1\} \setminus \{2\}
\\
R_{\beta} \quad (\text{having } w_{1}P_{\beta}w_{2}) & \text{if } i \in  B := \{i \in A^c : w_i R_\beta w_1\} \cup \{2\}\\
R_{\gamma} \quad (\text{having } w_{2}P_{\gamma}w_{1}) & \text{if } i \in C := (A \cup B)^c.
\end{cases}
\end{align*}

Note that $R_{1}=R_{\alpha}$ and $R_{2}=R_{\beta}$. Let $\succ$ be any tie-breaking profile such that for all $i\in N$, $i\succ_{i}j$ for all $j\neq i$. Additionally, let agent 2's tie-breaking order be $2\succ_{2}1\succ_{2}\cdots$. Let $x=\text{TTC}_{\succ}(R)$. Recall that by individual rationality of \ttc and our choice of $\succ$, for any agent $i$ if $x_i \neq w_i$ then $x_i P_i w_i$.  

\begin{claim}
\label{claim:rgsp} 
$x_{1}=w_{1}$ and $w_1 P_2 x_{2}$.
\end{claim}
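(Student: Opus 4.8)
My plan is to derive both parts of the claim from a single application of Lemma \ref{lem:xw}, choosing the lemma's two ``locking'' sets so that they pin down exactly the houses agent $2$ must be kept away from. The crucial observation is that the construction's set $B$ is \emph{not} directly usable: agent $2 \in B$ owns $w_2$ with $w_1 P_\beta w_2$, so $B \not\subseteq \{i \in A^c : w_i R_\beta w_1\}$ and the lemma's second hypothesis fails for $B$. Instead I would apply the lemma with $h_1 = h_2 = w_1$, with its first set taken to be the construction's $A = \{i : w_i R_\alpha w_1\} \setminus \{2\}$, and its second set taken to be $B^+ := \{i \in A^c : w_i R_\beta w_1\} = B \setminus \{2\}$.

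First I would verify the hypotheses of Lemma \ref{lem:xw}. For the $\alpha$ condition, since $w_1 I_\alpha w_2$ we have $2 \notin \{i : w_i P_\alpha w_1\}$, so deleting agent $2$ preserves $\{i : w_i P_\alpha w_1\} \subseteq A \subseteq \{i : w_i R_\alpha w_1\}$. For the $\beta$ condition, $B^+$ equals $\{i \in A^c : w_i R_\beta w_1\}$, so both required inclusions hold (with equality on the right). Every agent in $A$ has type $R_\alpha$ and every agent in $B^+ \subseteq B$ has type $R_\beta$, and $\succ$ is self-prioritizing; the remaining agents (agent $2$ and the type-$\gamma$ agents of $C$) are left unconstrained by the lemma. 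Lemma \ref{lem:xw} then yields $x_i = w_i$ for every $i \in A \cup B^+$. Since $1 \in A$, this gives $x_1 = w_1$, the first half of the claim.

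For the second half, I would argue that every house weakly $\beta$-preferred to $w_1$ is now locked with its owner. The owners of such houses, namely $\{i : w_i R_\beta w_1\}$, split as $\{i \in A : w_i R_\beta w_1\} \subseteq A$ together with $\{i \in A^c : w_i R_\beta w_1\} = B^+$; hence $\{i : w_i R_\beta w_1\} \subseteq A \cup B^+$, and each such agent receives his own endowment. Because $x$ is a bijection, the houses $\{h : h R_\beta w_1\}$ are consumed exactly by their owners. Agent $2$ is not among these owners ($w_1 P_\beta w_2$ forces $2 \notin B^+$, and $2 \notin A$ by construction), so $x_2 \notin \{h : h R_\beta w_1\}$, i.e. $w_1 P_\beta x_2$. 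As $R_2 = R_\beta$, this is precisely $w_1 P_2 x_2$.

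The main obstacle is the correct choice of the second locking set: one must notice that agent $2$ cannot be included (its endowment is $\beta$-inferior to $w_1$), yet excluding it still locks up \emph{every} house weakly $\beta$-preferred to $w_1$, which is exactly what prevents agent $2$ from acquiring such a house. The only steps needing care are the inclusion $\{i : w_i R_\beta w_1\} \subseteq A \cup B^+$ and the translation of ``agent $2$ receives none of these houses'' into the strict comparison $w_1 P_2 x_2$. Note that the additional detail $2 \succ_2 1 \succ_2 \cdots$ in agent $2$'s tie-break is not used here (only self-prioritization is needed), so it presumably serves a subsequent claim about agent $1$'s manipulation.
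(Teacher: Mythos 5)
Your proposal is correct and matches the paper's own proof essentially step for step: the paper likewise makes a single application of Lemma \ref{lem:xw} with $w_1$ in the roles of both $h_1$ and $h_2$, taking $\bar A = \{i \in N : w_i R_\alpha w_1\}\setminus\{2\}$ and $\bar B = \{i \in \bar A^c : w_i R_\beta w_1\}$ (exactly your $A$ and $B^+$, with agent $2$ excluded from the second set for the same reason you identify), concluding $x_1 = w_1$ and then deducing $w_1 P_2 x_2$ by the same observation that every house weakly $R_\beta$-preferred to $w_1$ is locked with its owner in $\bar A \cup \bar B$. Your side remarks — that the construction's $B$ itself fails the lemma's hypothesis because of agent $2$, and that the extra condition $2 \succ_2 1 \succ_2 \cdots$ is reserved for the subsequent claim — are both consistent with the paper.
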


\begin{adjustwidth}{1cm}{}
    \begin{proof}
        Let $\bar A = \{i \in N : w_i R_\alpha w_1\} \setminus \{2\}$ and $\bar B = \{i \in \bar A^c : w_i R_\beta w_1\}$. Note that $R_i = R_\alpha$ for all $i \in \bar A$ and $R_i = R_\beta$ for all $i \in \bar B$. It follows from Lemma \ref{lem:xw} (with $w_1$ filling the roles of both $h_1$ and $h_2$) that $x_i = w_i$ for all $i \in \bar A \cup \bar B$.  Therefore $x_1 = w_1$. Moreover since $R_2 = R_\beta$, if $x_2 R_2 w_1$ then $x_2 = w_i$ for some $i \in \bar A \cup \bar B$, a contradiction.  So $w_1 P_2 x_{2}$.
    \end{proof}
\end{adjustwidth}

\noindent Now suppose that agent 1 misreports $R_{1}'=R_{\gamma}$.
Let $R'=(R'_{1},R_{-1})$ and $y=\text{TTC}_{\succ}(R')$.
\begin{claim}
\label{claim:rprimegsp} $y_{1}=w_{2}$ and $y_{2}=w_{1}$.
\end{claim}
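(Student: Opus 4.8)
The plan is to apply Lemma~\ref{lem:xw} to the misreported profile $R'$ to freeze the ``upper'' agents in place, and then to show that the remaining agents $\{1,2\}\cup C$ trade so that $1$ and $2$ swap. Since only agent $1$'s report changes between $R$ and $R'$ (from $R_\alpha$ to $R_\gamma$), the sets $A$ and $\bar B=\{i\in A^c:w_iR_\beta w_1\}=B\setminus\{2\}$ from the proof of Claim~\ref{claim:rgsp} still consist entirely of $R_\alpha$- and $R_\beta$-agents once agent $1$ is excluded. I would apply Lemma~\ref{lem:xw} to $R'$ with both $h_1$ and $h_2$ equal to $w_1$, taking the lemma's first set to be $A\setminus\{1\}$ and its second set to be $\bar B$. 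The hypotheses hold because $w_1 I_\alpha w_1$ is not strict (so agent $1$ need not lie in the first set), and because neither agent $2$ nor any $c\in C$ has an endowment weakly $R_\beta$-above $w_1$ (indeed $w_1 P_\beta w_2$ and $w_1 P_\beta w_c$). The lemma then yields $y_i=w_i$ for all $i\in(A\cup B)\setminus\{1,2\}$, reducing the problem to the behavior of the agents $\{1,2\}\cup C$ on the houses $\{w_1,w_2\}\cup\{w_c:c\in C\}$.

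In this reduced problem agent $2$ reports $R_\beta$ while every agent in $\{1\}\cup C$ reports the \emph{same} preference $R_\gamma$. First I would pin down agent $2$: since $w_1 P_\beta w_2$ and $w_1 P_\beta w_c$ for every $c\in C$, agent $2$'s favorite among the non-frozen houses is $w_1$, and the supplementary tie-break $2\succ_2 1\succ_2\cdots$ breaks any $R_\beta$-indifference with $w_1$ in favor of agent $1$. Hence, while $w_1$ is available, agent $2$ points to agent $1$ and holds onto $w_2$.

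Next I would exploit that $\{1\}\cup C$ share $R_\gamma$. At any stage the $R_\gamma$-best available non-frozen house is common to all of them; if it is a $C$-house (or $w_1$), its owner is himself an $R_\gamma$-agent whose favorite it is, so he self-loops and is removed without forming any multi-agent cycle. Thus the $C$-houses ranked above $w_2$ are peeled off one at a time. Because $w_2 P_\gamma w_1$ and agent $2$ keeps $w_2$ available until he trades, agent $1$ strictly prefers $w_2$ to his own $w_1$ throughout and so never self-loops, nor is he drawn into any $C$-house cycle (its owner self-loops rather than pointing back to him). He therefore survives until $w_2$ becomes the top available house, at which point he points to agent $2$; combined with $2\to 1$ this produces the cycle $1\to 2\to 1$, giving $y_1=w_2$ and $y_2=w_1$.

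The main obstacle is justifying the reduction to the sub-market and, within it, ruling out that some $C$-agent intercepts $w_1$ or $w_2$. Both are handled by one structural fact: any \ttc cycle containing an agent who keeps his endowment must be a self-loop, so the frozen agents of $(A\cup B)\setminus\{1,2\}$ never trade with $\{1,2\}\cup C$, and the common report $R_\gamma$ forces the current top house among $\{1\}\cup C$ to be self-looped out rather than routed through a long cycle that could divert $w_2$ to a $C$-agent. I would make this precise by tracking the \ttc stages, appealing to Fact~\ref{fact:assignedbefore} and to the order-independence of the cycles noted at the start of the appendix, rather than by literally deleting agents from the graph.
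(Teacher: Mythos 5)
Your proposal is correct, and its first half coincides with the paper's: both invoke Lemma \ref{lem:xw} on $R'$ with $h_{1}=h_{2}=w_{1}$ to freeze $(A\cup B)\setminus\{1,2\}$ (the paper's $\bar{A}\cup\bar{B}$). The second half genuinely diverges. The paper stays at the level of the final allocation: it freezes a third set $\bar{C}$ (the $R_{\gamma}$-reporters with $w_{i}R_{\gamma}w_{2}$, excluding agent 2) by rerunning the best-endowed-improver argument from the lemma's proof, deduces $w_{2}R'_{1}y_{1}$ and $w_{1}R'_{2}y_{2}$ with indifference only at equality, and then rules out $y_{1}\neq w_{2}$ via Fact \ref{fact:assignedbefore}: agent 2 would have to be assigned strictly before agent 1, hence while $w_{1}$ is still available, forcing $y_{2}$ to be some frozen agent's endowment, a contradiction. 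You instead track the TTC graph dynamically: since $y_{i}=w_{i}$ forces agent $i$'s cycle to be a self-loop, frozen agents never trade with $\{1,2\}\cup C$; since every agent in $\{1\}\cup C$ reports $R_{\gamma}$ with self-first tie-breaking, the owner of any currently top-class house among them points to himself, so no multi-agent cycle can pass through a $C$-agent; hence the only executable cycle containing agent 1 or agent 2 is $1\to2\to1$. This procedural route is valid and arguably more transparent about \emph{why} the swap must occur, at the cost of the stage-tracking bookkeeping that the paper's static argument avoids. One sentence of yours does need the repair you yourself flag at the end: ``while $w_{1}$ is available, agent 2 points to agent 1'' is literally false at early steps whenever some frozen house is strictly $R_{\beta}$-preferred to $w_{1}$ (the endowments of $B\setminus\{2\}$ may be exactly such houses), since agents point to their favorite among \emph{all} available houses, not only non-frozen ones. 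The correct statement, which your self-loop fact delivers, is that agent 2 can never be \emph{assigned} except while pointing at agent 1, because a cycle through a frozen agent would contradict that agent's self-loop property; with that restatement, and the analogous patience argument for agent 1 waiting out the removal of houses $R_{\gamma}$-above $w_{2}$, your argument goes through.
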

\begin{adjustwidth}{1cm}{}
\begin{proof}
    Let $\bar A = \{i \in N : w_i R_\alpha w_1\} \setminus \{1,2\}$ and $\bar B = \{i \in \bar A^c : w_i R_\beta w_1\} \setminus \{1\}$. Note that $R'_i = R_\alpha$ for all $i \in \bar A$ and $R'_i = R_\beta$ for all $i \in \bar B$. It follows from Lemma \ref{lem:xw} that $y_i = w_i$ for all $i \in \bar A \cup \bar B$.

    Let $\bar C = \{i \in (\bar A \cup \bar B)^c : w_i R_\gamma w_2\} \setminus \{2\}$. We will show that $y_{i}=w_{i}$ for all $i \in \bar C$. Note that $R'_i = R_\gamma$ for all $i \in \bar C$. Toward a contradiction, suppose $U:=\{i\in \bar C:y_{i}\neq w_{i}\}=\{i\in \bar C:y_{i}P_{\gamma}w_{i}\}$ is non-empty. Fix $i\in U$ such that $w_{i}R_{\gamma}w_{i'}$ for all $i'\in U$. Consider $j$ endowed with $y_{i}$; that is, $w_{j}=y_{i}$. Since $y_{j}\neq w_{j}$, it must be $j\notin \bar A \cup \bar B$, since $y_{i}=w_{i}$ for all $i\in \bar A \cup \bar B$. Then, since $y_{j} \neq w_{j}$ and $w_j P_\gamma w_i R_\gamma w_2$, it must be that $j \in U$. But this contradicts the assumption that $w_{i}R_{\gamma}w_{i'}$ for all $i'\in U$.

    Thus we have shown that $y_i = w_i$ for all $i \in \bar A \cup \bar B \cup \bar C$. Therefore, $y_i = w_i$ for all $i \neq 1$ such that $w_i R_\beta w_1$ and for all $i \neq 2$ such that $w_i R_\gamma w_2$. Consequently, $w_2 R_1' y_1$ (with indifference only if $y_1=w_2$) and $w_1 R_2' y_2$ (with indifference only if $y_2=w_1$).
    
    
    Toward a contradiction, suppose that $y_1 \neq w_2$. Then $w_2 P'_1 y_1$. By Fact \ref{fact:assignedbefore}, this implies that agent 2 was assigned before agent 1 was assigned during the process of $\text{TTC}_{\succ}(R')$. By definition of $\text{TTC}_\succ$, $x_2 R_2 w_1$. Then $y_2 = w_i$ for some $i \neq 1$, and $w_i R_\beta w_1$. But this contradicts $x_i = w_i$ for all $i \in \bar A \cup \bar B \cup \bar C$. By an analogous argument, it is straightforward to show that $y_2 = w_1$.
    
\end{proof}
\end{adjustwidth}

Observe that $w_{2}I_{1}w_{1}$ and $w_{1}P_{2}w_{2}$. Thus, via the misrepresentation $R'$, we have $y_{1}I_{1}x_{1}$ but $y_{2}P_{2}x_{2}$, showing $\text{TTC}_{\succ}(R)$ is not GSP.
\end{proof}

\section{Matching with capacities and priorities}

\label{schoolchoiceappendix}

We briefly note that TTC in the objective indifferences setting is not identical to TTC where the objects have capacities and priorities. Intuitively, in Shapley-Scarf markets with objective indifferences, the fixed tie-breaking rule determines which object an agent points at. Conversely, a priority ranking determines which agent an object points at. Consider the following example.
\begin{example}
Let the set of schools be $H=\{A,B,C\}$, with $C$ having 2 seats. Let the students be $N=\{a,b,c_{1},c_{2}\}$, where $a$ is ``endowed'' with $A$, and so on. Below are a possible priority ranking for the school choice setting and a possible tie-breaking profile for the Shapley-Scarf setting.

\begin{figure}[H]
\centering
\begin{subfigure}[t]{0.47\textwidth}
    \centering
\centering
\begin{tabular}{ccc}
$A$ & $B$ & $C$ \\
\hline 
$a$ & $b$ & $c_{1}$ \\
$b$ & $a$ & $c_{2}$ \\
$c_{1}$ & $c_{2}$ & $a$ \\
$c_{1}$ & $c_{1}$ & $b$
\end{tabular}
    \caption*{\small School priority for the school choice setting}
\end{subfigure}
\hfill
\begin{subfigure}[t]{0.47\textwidth}
    \centering
\centering
\begin{tabular}{cccc}
$\succ_a$ & $\succ_b$ & $\succ_{c_{1}}$ & $\succ_{c_{2}}$ \\
\hline 
$c_{1}$ & $c_{2}$ & $c_{1}$ & $c_{1}$ \\
$c_{2}$ & $c_{1}$ & $c_{2}$ & $c_{2}$ \\
$a$ & $a$ & $a$ & $a$ \\
$b$ & $b$ & $b$ & $b$
\end{tabular}
    \caption*{\small Tie-breaking profile for Shapley-Scarf setting}
\end{subfigure}
\end{figure}

\noindent Consider the preference profiles $R=(R_{a},R_{b},R_{c_{1}},R_{c_{2}})$
and $R'=(R'_{a},R'_{b},R'_{c_{1}},R'_{c_{2}})$, shown below.

\begin{figure}[H]
\centering
\begin{subfigure}[t]{0.47\textwidth}
    \centering
\centering
\begin{tabular}{cccc}
$R_a$ & $R_b$ & $R_{c_{1}}$ & $R_{c_{2}}$ \\
\hline 
$C$ & $C$ & $A$ & $A$ \\
$A$ & $A$ & $B$ & $B$ \\
$B$ & $B$ & $C$ & $C$
\end{tabular}
    \caption*{\small Preference profile $R$}
\end{subfigure}
\hfill
\begin{subfigure}[t]{0.47\textwidth}
    \centering
\centering
\begin{tabular}{cccc}
$R'_a$ & $R'_b$ & $R'_{c_{1}}$ & $R'_{c_{2}}$ \\
\hline 
$C$ & $C$ & $B$ & $B$ \\
$A$ & $A$ & $A$ & $A$ \\
$B$ & $B$ & $C$ & $C$
\end{tabular}
    \caption*{\small Preference profile $R'$}
\end{subfigure}
\end{figure}

Note that under $R$ and $R'$, both $c_{1}$ and $c_{2}$ have the same preferences. TTC with school priorities results in $(A:c_{1},B:c_{2},C:ab)$ and $(A:c_{2},B:c_{1},C:ab)$ under $R$ and $R'$ respectively. Note that $c_{1}$ must receive his preferred school in either case, since his priority at school $C$ is higher than $c_{2}$'s priority at school $C$. In fact, in the school choice setting, since $c_{1}$ has higher priority at $C$ than $c_{2}$ has, whenever $c_{1}$ and $c_{2}$ have the same preferences $c_{1}$ will weakly prefer his assignment to $c_{2}$'s assignment. By contrast, in the Shapley-Scarf setting, $\text{TTC}_{\succ}$ results in $(A:c_{1},B:c_{2},C:ab)$ and $(A:c_{1},B:c_{2},C:ab)$ under $R$ and $R'$ respectively. Now, $c_{1}$ does not necessarily receive his top choice when $c_{1}$ and $c_{2}$ have the same preferences. Under $R'$, $c_{2}$ gets his top choice at the expense of $c_{1}$, since $c_{2}\succ_{b}c_{1}$.

It is not always possible to recreate a school priority order with a tie-breaking profile, and vice versa.
\end{example}

\end{document}